 \patchcmd\Gread@eps{\@inputcheck#1 }{\@inputcheck"#1"\relax}{}{}
\newcommand {\bE} {\mathbb{E}}
\newcommand{\calX}{{\cal X}}
\newcommand{\de}{\stackrel{.}{=}}
\newcommand{\be}{\begin{equation}}
\newcommand{\ee}{\end{equation}}
\newcommand{\beqna}{\begin{eqnarray}}
\newcommand{\eeqna}{\end{eqnarray}}
\DeclareFontFamily{U}{mathx}{\hyphenchar\font45}
\DeclareFontShape{U}{mathx}{m}{n}{
      <5> <6> <7> <8> <9> <10>
      <10.95> <12> <14.4> <17.28> <20.74> <24.88>
      mathx10
      }{}
\DeclareSymbolFont{mathx}{U}{mathx}{m}{n}
\DeclareMathSymbol{\bigtimes}{1}{mathx}{"91}
\newcommand{\abs}[1]{\left|#1\right|}
\newtheorem{definition}{Definition}
\newtheorem{theorem}{Theorem}
\newtheorem{proof}{Proof}
\newtheorem{lemma}{Lemma} 
\newtheorem{corollary}{Corollary}
\newtheorem{remark}{Remark}
\newcommand{\pp}[1]{\left[#1\right]}
\newcommand{\ppp}[1]{\left\{#1\right\}}
\newcommand{\salman}[1]{\textcolor{black}{#1}}
\newcommand{\rev}[1]{\textcolor{black}{#1}}
\newcommand{\subalign}[1]{%
	\vcenter{%
		\Let@ \restore@math@cr \default@tag
		\baselineskip\fontdimen10 \scriptfont\tw@
		\advance\baselineskip\fontdimen12 \scriptfont\tw@
		\lineskip\thr@@\fontdimen8 \scriptfont\thr@@
		\lineskiplimit\lineskip
		\ialign{\hfil$\m@th\scriptstyle##$&$\m@th\scriptstyle{}##$\crcr
			#1\crcr
		}%
	}
}
\begin{document}

\title{\rev{Centralized vs Decentralized Targeted Brute-Force Attacks: Guessing with Side-Information}}

\author{\IEEEauthorblockN{Salman Salamatian,}
\and
\IEEEauthorblockN{Wasim Huleihel,} 
\and
\IEEEauthorblockN{Ahmad Beirami,} 
\and
\IEEEauthorblockN{Asaf Cohen,}
\and
\IEEEauthorblockN{Muriel M\'edard}
\IEEEoverridecommandlockouts
\IEEEcompsocitemizethanks{
\IEEEcompsocthanksitem
This work was presented in part at 2017 IEEE Symposium on Information Theory~\cite{salamatian2017centralized}.

S. Salamatian, A. Beirami, and M. M\'edard are with the department of Electrical Engineering and Computer Science, MIT, Cambridge, 02139 MA (\{salmansa, beirami, medard\}@mit.edu).
W. Huleihel is with the Department of Electrical Engineering-Systems at Tel-Aviv University, Tel-Aviv 6997801, Israel (e-mail: wasimh@mit.edu). 
A. Cohen is with the department of Electrical Engineering, Ben-Gurion University of the Negev, 8410501 Israel (coasaf@bgu.ac.il)
}}

\parskip 3pt

\maketitle

\begin{abstract}
    According to recent empirical studies, a majority of users have the same, or very similar, passwords across multiple password-secured online services. 
    This practice can have disastrous consequences, as one password being compromised puts all the other accounts at much higher risk.
    Generally, an adversary may use any side-information he/she possesses about the user, be it demographic information, password reuse on a previously compromised account, or any other relevant information to devise a better brute-force strategy (so called targeted attack).
	
	In this work, we consider a distributed brute-force attack scenario in which $m$ adversaries, each observing some side information, attempt breaching a password secured system. We compare two strategies: an uncoordinated attack in which the adversaries query the system based on their own side-information until they find the correct password, and a fully coordinated attack in which the adversaries pool their side-information and query the system together. For passwords $\mathbf{X}$ of length $n$, generated independently and identically from a distribution $P_X$, we establish an asymptotic closed-form expression for the uncoordinated and coordinated strategies when the side-information $\mathbf{Y}_{(m)}$ are generated independently from passing $\mathbf{X}$ through a memoryless channel $P_{Y|X}$, as the length of the password $n$ goes to infinity. We illustrate our results for binary symmetric channels and binary erasure channels, two families of side-information channels which model password reuse. We demonstrate that two coordinated agents perform asymptotically better than any finite number of uncoordinated agents for these channels, meaning that sharing side-information is very valuable in distributed attacks.
\end{abstract}

\section{Introduction}\label{sec:intro}

Brute-force attacks represent a significant portion of cyber-attacks \cite{quaterlyreport}. They target password-secured systems and consist in querying tentative passwords until the correct one is found. This can take place in an online way, where the adversary connects to a host server, sends her password queries, and receives notification of her success or failure after each guess. More often though, these attack take place offline. In this case, the adversary has previously gained access to a collection of hashed passwords through another breach, and queries tentative passwords by comparing them to a hash. In either case, the number of queries---or guesses---is a surrogate for the computational effort the adversary has to accomplish to breach the system. As such, understanding quantities such as the average number of guesses before the correct password is found, are useful in assessing the security risks of a system against brute-force attacks. This can be modeled by the \emph{guesswork}, which measures the number of queries needed before \emph{guessing} correctly a discrete random variable $X$ with probability mass function (pmf) $P_X$. More precisely, let Alice select a secret sequence of length $n$ denoted by $\mathbf{X} = (X_1,\ldots,X_n)$, where $X_i \in \calX$. Assume further that this sequence is selected at random, such that $\{X_i\}_{i=1}^n$ are independent and identically distributed (i.i.d.), with pmf $P_X$.
Then, Bob, who does not see the realization of $\mathbf{X}$ but does know $P_X$, presents to Alice a successive sequence of guesses $\hat{\mathbf{X}}_1,\hat{\mathbf{X}}_2$, and so on.
\salman{For each guess $\hat{\mathbf{X}}_i$, Alice checks whether it is the correct sequence $\mathbf{X}$.} If the answer is affirmative, Alice says ``yes", and the game ends. 
Otherwise, the game continues, and Alice examines subsequent guesses. In this case, Bob's optimal strategy (if he wishes to minimize the number of guesses) consists of, first, constructing a list of possible password sequences ordered from most to least likely according to $P_X$, and then, querying passwords one by one from this list. 

In this paper, we study a distributed attack scenario, where $m$ adversarial agents receive additional side-information $\mathbf{Y}$ about the password, a so-called target attack \cite{wang2016targeted,wang18codaspy,das2014tangled}. 
In this setting, the agents construct an updated list of password strings, this time, ordered with respect to $P_{X|Y}(\cdot|Y)$, that is they update their belief on the password distribution by taking into account the side-information they have observed.
In its most general form, the side-information can model complex additional information that the adversary may have acquired on the choice of the password, ranging from background search on the user who chooses the password, to behind the back attacks in which an illegitimate person observes parts of the password. 
This setting can also indirectly model adversaries and users over multiple accounts, some of which have been compromised. 
Suppose Alice has several accounts, each requiring a password. 
She may decide to use one identical password for all of her accounts, where the compromise of one of the accounts puts in peril all of her accounts. 
On the other extreme, she may decide to use completely independent passwords for each of the accounts, in which case one password being compromised does not give away any information on any of the other passwords. 
In practice, most users settle for a solution in between these two extremes. 
For example, Alice may choose to slightly tweak her passwords from one account to another as to avoid the disastrous consequences of one account being compromised, while still maintaining some convenience. 
In this case, if one password is compromised, an adversary gains some side-information about the rest of the passwords, see, e.g., \cite{wang18codaspy}.

We say that agents are coordinated if they know the guessing strategies of each other, and in this context it means that the agents are able to communicate about their knowledge of the side-information on the password.
We contrast two strategies illustrated in Fig.~\ref{fig:0}. The first is a decentralized approach in which the agents do not communicate at all, representing the case where agents are fully uncoordinated. The second is a centralized approach in which the side-information is pooled and a central authority provides the optimal lists to the agents, representing a coordinated attack. 
We show that in the case of an uncoordinated attack, having even a finite number of independent sources of side-information reduces the number of queries exponentially. 
However, coordination is very powerful, as complete knowledge of all the side-information can potentially reduce the computational burden on the adversaries by an even bigger exponent. 
This should be contrasted with the case where side-information is unavailable, as the lack of coordination there does not change the computational burden asymptotically.
 
\noindent\textbf{Related Work:} Guesswork has a long history dating from the work of Massey in \cite{massey1994guessing}, where it was first noticed that guesswork is not related to entropy. The problem was picked up again later by Arikan \cite{arikan1996inequality}, in the context of sequential error correcting codes. Since then, Guesswork has been the object of multiple extensions and generalizations, a subset of which is noted below. The problem of a cipher with a guessing wiretapper was considered in \cite{MerhavArikan}. The problem of guessing subject to distortion and constrained Shannon entropy were investigated in \cite{ArikanMerhav} and \cite{Beirami}, respectively. The above results have been generalized to ergodic Markov chains \cite{MaloneSullivan} and a wide range of stationary sources \cite{PfisterSullivan}. The problem of guessing under source uncertainty was investigated in \cite{Sundaresan}. The analysis of the guessing exponents, using large deviations theory, was considered in \cite{HanawalSundaresan}. In \cite{ChristiansenDuffy} it was shown that the guesswork satisfies a large deviation property and the rate function was characterized. Guesswork under erasures was studied in \cite{christiansen2013guessing}, and both the moments and rate functions of the Guesswork were obtained; results which are directly related to our analysis of BEC side-information. A distributed attack model based on password hints was proposed in \cite{bracher2017guessing} and evaluated under guesswork metrics. Applications of Guesswork to botnet attacks were studied in \cite{salamatian2019botnets} and \cite{merhav2018universal}. Subsequently, a geometric characterization of the guesswork was established in \cite{Beirami2}, followed by an information-theoretic characterization of the LDP rate function for guesswork in~\cite{beirami2018characterization}. 
Finally, Salamatian {\it et. al.} provided a  characterization of guesswork using a mismatched distribution~\cite{salamatian2019mismatched}.

\rev{We also mention several papers which are relevant in the study of password generation and brute-force attacks such as \cite{wang2016implications,wang2017zipf,blocki2018economics,bonneau2012science}. In  \cite{bonneau2012science}, a large corpus of password datasets is studied, and several quantities of interest, such as Guesswork, are empirically evaluated. In \cite{wang2016implications,wang2017zipf}, Wang et. al revisit the results of \cite{bonneau2012science}, and provide an improved modeling of the password generation process. In particular, they proposed variants of the Zipf's law distribution (i.e. PDF-Zipf and CDF-Zipf), and validate this model on real-world datasets. The modeling according to a CDF-Zipf's law has also been validated in \cite{blocki2018economics}. We refer to the aforementioned references for more details on the subject.} Specifically related to our setting are targeted attacks, in which the adversary uses the personal information of a user in his guessing strategy, see e.g. \cite{wang2016targeted}. It was shown in \cite{wang2018security,das2014tangled} that these targeted attacks are particularly threatening, as most users chose their passwords according to some personal information which an adversary may have access to (e.g. birthdays, names of family members or pets, locations, or simply password reuse).

\noindent \textbf{Main Contributions:} 
We contrast two strategies, one in which the agents pool their information, referred to as centralized strategy, and one where the agents construct their lists separately, referred to as decentralized strategy. For decentralized strategies we provide a single letter characterization of the asymptotic number of guesses for arbitrary discrete memoryless channels. We complement this with additional and stronger results specialized to the case of binary erasure channel (BEC) and binary symmetric channel (BSC) side-information, where we show that letting the agents agree on a strategy before observing the side-information does not improve the performance. We compare these results with centralized strategies.
Although in both cases, the guesswork is reduced exponentially, our results suggest the strength of cooperation, as two agents sharing their side-information are more powerful than any number of agents acting on their own. 

\noindent \textbf{Previous Publication:} In a conference publication \cite{salamatian2017centralized},
we introduced centralized and decentralized strategies, and studied them for both BEC and BSC channels. \rev{This publication distinguishes itself from our previous work by the following:
	\begin{itemize}
		\item Generalizes the results on decentralized mechanisms to arbitrary memoryless channels (beyond BEC and BSC) in Theorem~3. Note that, this is achieved via a proof technique which is different than the one in \cite{salamatian2017centralized}, as the latter uses a method which is specific to the considered channels. Instead, in this paper, we provide a close-form solution for the guesswork exponent of the decentralized mechanism in terms of quantities which depend on the arbitrary channel $P_{Y|X}$.
		\item Strengthens the results on decentralized mechanisms for the BEC and BSC in \cite{salamatian2017centralized}. More precisely, we show that for the BEC and BSC, planning a joint strategy prior to observing the side-information does not change the value of the guessing exponent (see Remark~1).
		\item Revisits the results in \cite{salamatian2017centralized} into the broader context of the password breaking, and brute-force security literature.
	\end{itemize}
}

\begin{figure*}
	\centering
	\includegraphics[width = .6\textwidth]{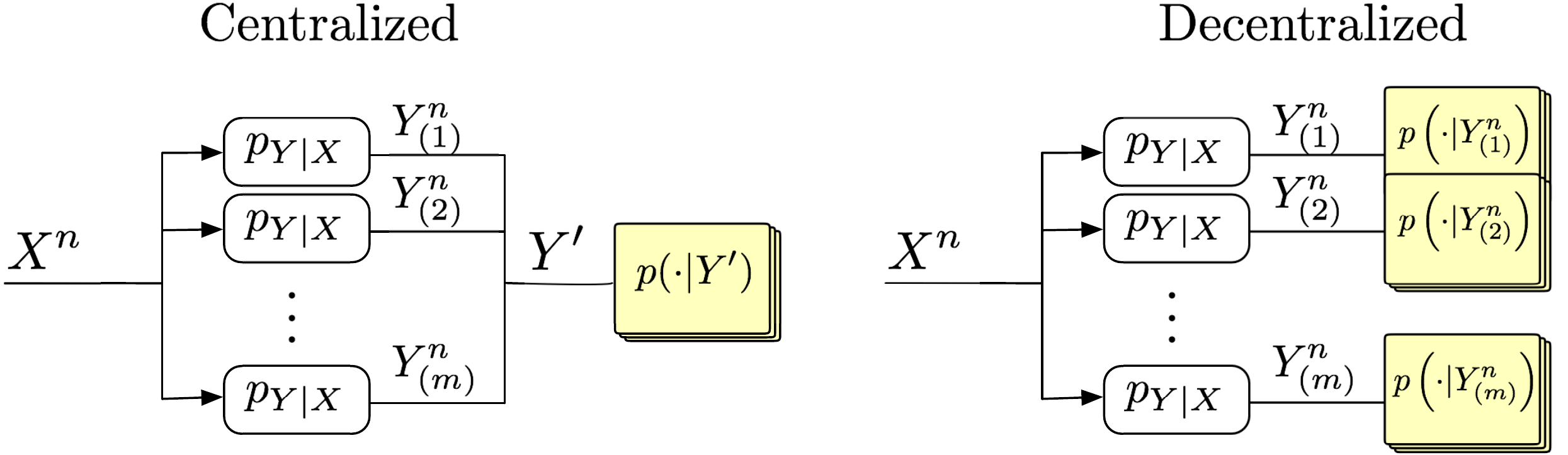}
	\caption{In a coordinated attack, a single list is constructed by collecting all the side-information. In the uncoordinated setting, each agent constructs a separate list.}\label{fig:0}
\end{figure*}

\section{Notation and Background}\label{sec:notation}

Throughout this paper, scalar random variables (RVs) will be denoted by capital letters, and we denote their sample values and alphabets by the corresponding lower case, and calligraphic letters, respectively, e.g. $X$, $x$, and $\mathcal{X}$. 
When considering vectors of random variables, we use the notation $\mathbf{X}_n$ to designate the sequence of RVs $(X_1,\ldots, X_n)$. When clear from the context, we may drop the subscript, e.g., $\mathbf{X}$.
We will reserve the capital letters $P$ and $Q$ to denote probability distribution, which we will subscript with the random variables it is associated with, if applicable. As such, $ (X,Y) \sim Q_{X,Y}$ signifies that the pair $(X,Y) \in \mathcal{X} \times \mathcal{Y}$ is distributed according to the probability distribution $Q_{X,Y}$. As customary, we also denote by channels $X$ to $Y$ by the matrix of single-letter transition $P_{X|Y}$.

The expectation operator over the underlying measure $Q$ will be denoted by $\mathbb{E}_Q\ppp{\cdot}$, and we may once again drop the subscript if clear from the context.
In this paper, we opt to denote information measures by subscripting the corresponding probability distribution, e.g., $I_Q(X;Y)$ shall be the mutual information of $(X,Y) \sim Q_{X,Y}$, $H_Q(X)$ shall be the entropy of $X \sim Q$
Information measures induced by the generic joint distribution $Q_{XY}$, will be subscripted by $Q$, for example, $I_Q(X;Y)$ will denote the corresponding mutual information, etc. The Kullback-Liebler (KL) divergence between two probability measures $P$ and $Q$ will be denoted by $D(P||Q)$. 
The weighted KL divergence between two channels, $Q_{Y|X}$ and $P_{Y|X}$, with weight $P_X$, is defined as
\begin{align}
&D(Q_{Y|X}||P_{Y|X}\vert P_X)\nonumber\\
&\quad\triangleq \sum_{x\in\mathcal{X}}P_X(x)\sum_{y\in\mathcal{Y}}Q_{Y|X}(y\vert x)\log\frac{Q_{Y|X}(y\vert x)}{P_{Y|X}(y\vert x)}. \nonumber
\end{align}
Similarly, for entropies it will be convenient to explicitly write the distributions, e.g. $H(P_X)$, along with the conditional version $H(P_{X|Y}|P_Y)$ defined in the usual way
\begin{align}
	H(P_{Y|X} | P_X) &\triangleq - \sum_{x \in \mathcal{X}} P_X(x)\sum_{y \in \mathcal{Y}} P_{Y|X}(y|x)\log {P_{Y|X}(y|x)} . \nonumber
\end{align}
When dealing with binary random variables we may use the short-hand notation $H(p)$, where it is understood that it refers to the usual entropy over a Bernouilli distribution parametrized by $p$. A similar notation will be used for divergences, e.g., $D(p_1\|p_2)$.

For a given vector $\mathbf{x}_n$, let $\hat{P}_{\mathbf{x}_n}$ denote the empirical distribution, that is, the vector $\{\hat{P}_{\mathbf{x}_n}(x),~x\in{\mathcal{X}}\}$, where $\hat{P}_{\mathbf{x}_n}(x)$ is the relative frequency of the letter $x$ in $\mathbf{x}_n$. 
Let $T(Q_X)$ denote the type class associated with $Q_X$, that is, the set of all sequences $\mathbf{x}_n$ for which $\hat{P}_{\mathbf{x}_n}=Q_X$. 
Similarly, for a pair of vectors $(\mathbf{x}_n,\mathbf{y}_n)$, the empirical joint distribution will be denoted by $\hat{P}_{\mathbf{x}_n,\mathbf{y}_n}$.

The cardinality of a finite set $\mathcal{A}$ will be denoted by $\abs{\mathcal{A}}$, its complement will be denoted by $\mathcal{A}^c$. For any integer $n \in \mathbb{N}^+$, we use the shorthand notation $[1:n] \triangleq \{1, \ldots, n \}$.
The probability of an event $\mathcal{E}$ will be denoted by $\Pr\left\{\mathcal{E}\right\}$. 
For two sequences of positive numbers, $\left\{a_n\right\}$ and $\left\{b_n\right\}$, the notation $a_n\doteq b_n$ means that $\left\{a_n\right\}$ and $\left\{b_n\right\}$ are of the same exponential order, i.e., $n^{-1}\log a_n/b_n\to0$ as $n\to\infty$, where logarithms are defined with respect to (w.r.t.) the natural basis, that is, $\log\left(\cdot\right) = \ln(\cdot)$. 
Finally, for a real number $x$, we denote $[x]_+ \triangleq \max\{0,x\}$.

\noindent \textbf{Background on Guesswork: } We call any one-to-one function $G: \mathcal{X} \to [1:|\mathcal{X}|]$ a guessing function, and let $G(x)$ for $x \in \mathcal{X}$ represent the position of $x$ in the list of guesses induced by $G(\cdot)$. We let $G^*(\cdot)$ be the \emph{optimal} guessing function, obtained by ordering the symbols in $\mathcal{X}$ by decreasing order of $P_X$-probabilities, with ties broken arbitrarily, and letting $G^*(x)$ be the position of $x$ in this list. The problem of bounding the expectation of guesses was investigated in \cite{Arikan}. Specifically, among other things, it was shown \cite[Theorem~1]{Arikan} that for any $\rho\geq0$, and any guessing function $G(\cdot)$,
\begin{align}
\bE\pp{G(X)^\rho}\geq(1+\log\abs{\calX})^{-\rho}\pp{\sum_{x\in\calX}P_X(x)^{\frac{1}{1+\rho}}}^{1+\rho}.
\end{align}
While the optimal guessing function satisfies\footnote{An improved bound by a factor of 2 was reported in \cite{Boztas0}.}
\begin{align}
\bE\pp{G^*(X)^\rho}\leq\pp{\sum_{x\in\calX}P_X(x)^{\frac{1}{1+\rho}}}^{1+\rho}.
\end{align}
Finally, letting $\mathbf{X} = (X_1,X_2,\ldots,X_n)$ be a sequence of independent and identically distributed (i.i.d.) random variables over a finite set, and letting $G^*(\mathbf{X})$ denote the optimal guessing function of a realization of $\mathbf{X}$, it was shown that \cite[Proposition~5]{Arikan}
\begin{align} 
E_\rho(P_X) \triangleq \lim_{n\to\infty}\frac{1}{n}\log\bE\pp{G^*(\mathbf{X})^\rho} = \rho\cdot H_{\frac{1}{1+\rho}}(X_1),\label{ArikanResult}
\end{align}
where $H_\alpha(X)$ is the R\'{e}nyi entropy of order $\alpha$ ($\alpha>0$, $\alpha\neq1$), defined as
\begin{align}
H_{\alpha}(X)\triangleq \frac{1}{1-\alpha}\log\pp{\sum_{x\in\calX}P_X(x)^\alpha}.
\end{align}
Note that the function $E_\rho(P_X)$ simply quantifies the exponential growth of the guesswork, as $n \to \infty$, and thus \eqref{ArikanResult} gives an \emph{asymptotic} operational characterization/meaning to R\'{e}nyi entropy of order $0\leq\alpha\leq1$.

\noindent \textbf{Guesswork with side-information:} We let $Y \in \mathcal{Y}$ be the output of $X$ through a discrete memoryless channel (DMC) with transition probability $P_{Y|X}$. 
Similarly, a guessing function with side information $y$ is any one-to-one function which we denote $G(\cdot|y): \mathcal{X} \to [1:|\mathcal{X}|]$.
Upon receiving a realization $y \in \mathcal{Y}$, Bob updates his belief on the distribution of $X$ by ordering the candidate strings in decreasing order with respect to the posterior $P_{X|Y}(\cdot | y)$. 
We denote by $G^*(\cdot|y)$ the optimal guessing function when the side-information realization is $Y=y$, i.e. $G^*(x|y)$ is the position of $x$ in the optimal list according to the distribution $P_{X|Y}(\cdot|y)$. 
We let the $\rho$-th moment of the \emph{conditional guesswork} $G^*(X|Y)$ be defined as the average:
\begin{align}
	\mathbb{E}\left[G^*(X|Y)^\rho\right] \triangleq \sum_{y \in \mathcal{Y}} P_Y(y) \mathbb{E}[G^*(X|y)^\rho] .
\end{align}
The asymptotic exponent of the conditional guesswork is defined as
\begin{align}
E_\rho(P_X, P_{Y|X}) \triangleq \lim_{n\to\infty} \frac{1}{n} \log \mathbb{E}\left[G^*(\mathbf{X}|\mathbf{Y})\right].
\end{align}
Finally, it was shown in \cite{Arikan} that 
\begin{align}
E_\rho(P_X,P_{Y|X}) &= \rho \cdot H_{\frac{1}{1+\rho}}(X|Y) \\ &= \rho \cdot \sum_{y \in \mathcal{Y}} P_Y(y) H_{\frac{1}{1+\rho}}(X|Y=y).
\end{align}

\section{Coordinated Brute-Force Attack}\label{sec:coordination}

For the remainder of the paper, we assume that a finite number $m$ of sources of side information are available. 
Precisely, for each of the $m$ agents, we consider an independent realization of a side information $\mathbf{Y}_{(i)}, i = 1,\ldots,m$, where $\mathbf{Y}_{(i)}$ is the output of the password sequence $\mathbf{X}$ through a discrete memory-less channel $P_{Y|X}$. It follows that the $\mathbf{Y}_{(i)}$ are identically distributed and independent given $\mathbf{X}$. Recall that coordination refers to the knowledge of the guessing strategies of the other adversaries. Because the optimal guessing strategy of an agent $1 \leq j \leq m$ depends only on the side information $\mathbf{Y}_{(j)}$, coordination is equivalent to sharing the side information. In other words, if no side information is shared, then the adversaries are uncoordinated, and if all the side-information are pooled and shared among all of the $m$ agents, then the adversaries are perfectly coordinated. We consider two strategies the $m$ adversaries may adopt, reflecting two extremes of coordination c.f. Fig.~\ref{fig:0}.

\noindent \textbf{Centralized:} The agents share their observations $\mathbf{Y}_{(i)}$, $i = 1,\ldots,m$, with a central authority which collapses the side information and constructs an optimal list based on $p_{X|Y_{(1)},\ldots,Y_{(m)}}$. The $\rho$-th moment of the guesswork in this strategy is thus,
\begin{equation}
\mathbb{E} \left[ G^*(\mathbf{X} | \mathbf{Y}_{(1)}, \ldots, \mathbf{Y}_{(m)} )^\rho \right],
\end{equation}
where $P_{Y_1,\ldots,Y_m|X}(y_1,\ldots,y_m | x) = \prod_{i = 1}^m P_{Y|X}(y_i|x)$. This corresponds to a completely coordinated attack. Finally, we define,
\begin{align}
	E_{\rho}^{(c)}(P_{Y|X}, m ) \triangleq \lim_{n\to\infty} \frac{1}{n} \log \mathbb{E} \ G^*(\mathbf{X}|\mathbf{Y}_{(1)},\ldots, \mathbf{Y}_{(m)})^\rho  \label{eq:centralized_exp}.
\end{align}

\noindent \textbf{Decentralized Mechanism:} Each of the $m$ agents tries to guess $\mathbf{X}$ based on its own observation $\mathbf{Y}_{(i)}$. The process ends when at least one of the agents correctly guesses $\mathbf{X}$. The $\rho$-th moment of the guesswork for this strategy is thus,
\begin{equation}
\mathbb{E}\left[ \min_{i = 1, \ldots,m} \  G^*(\mathbf{X} | \mathbf{Y}_{(i)})^\rho  \right],
\end{equation}
where $G^*(\mathbf{X}|\mathbf{Y}_{(i)})$ is the optimal guessing function given $\mathbf{Y}_{(i)}$, that is the position of $\mathbf{X}$ in the ordered list according to $P_{\mathbf{X}|\mathbf{Y}}(\cdot|\mathbf{Y} = \mathbf{Y}_{(i)})$. This corresponds to a completely uncoordinated attack. As before, we define
\begin{align}
E_{\rho}^{(d)}(p_{Y|X},m ) \triangleq \lim_{n\to\infty} \frac{1}{n} \log \mathbb{E}\left[ \min_{i = 1, \ldots, m}G^*(\mathbf{X}|\mathbf{Y}_{(i)})^\rho \right] \label{eq:decentralized_exp} .
\end{align}

In the sequel, we shall provide closed-form formulas for \eqref{eq:centralized_exp} and \eqref{eq:decentralized_exp}, and compare them in some examples. It has to be noted that we are studying guesswork behaviors for fixed $m$, that is $m$ may not grow with the block-length $n$. We may take the limit when $m \to \infty$, but it should be clear that the order of limits is crucial and an interchange of limits is not possible here.

\begin{remark}\label{remark:oblivious}
	The decentralized strategy we consider above is one in which each agent produce an optimal list regardless of the list produced by the other agents. In particular, it is not clear that this list should be the joint optimal list strategy. More precisely, it is clear that
	\begin{align}
	\min_{G_{(j)}, j = 1, \ldots, m} & \mathbb{E}\left[\min_{j = 1,\ldots, m} G_j(\mathbf{X}|\mathbf{Y}_{(i)})^\rho \right] \nonumber \\
	& \hspace{4em}\leq \mathbb{E}\left[ \min_{i = 1, \ldots,m} G^*(\mathbf{X} | \mathbf{Y}_{(i)})^\rho \right],
	\end{align}
	but unclear whether equality should hold. While the right-hand side corresponds to an uncoordinated case as we defined it previously, the left-hand side corresponds to a case in which the agents can coordinate in advance to choose their strategies but no more after the side-information is revealed. We shall address this difference when analyzing the performance of the decentralized scheme under some specific side-information channels for which it is possible to characterize the left-hand side, and shall show that they are, at least under these side-information channels, asymptotically identical .
\end{remark}

\rev{To illustrate the centralized and decentralized mechanisms, we consider the following toy example, which is based on the RockYou leaked password dataset.\\
\noindent \textbf{Toy Example:} We extract the top 1000 most likely passwords from the \emph{RockYou} dataset (see \cite{rocku} for a description of the dataset), and limit the scope to passwords with only lowercase letters for convenience. For each such password, we also generate $m = 3$ \emph{sister passwords} synthetically by randomly changing letters, where each letter is changed to any other lower-case letter with a probability of $50\%$. Those sister passwords model the effect of password reuse, and corresponds to the side-information $\mathbf{Y}_{(i)}$, that agent $i = 1, \ldots, m$ has access to. We refer to \cite{wang2016targeted} for an empirical study of the statistics of password reuse, which indicate that many users have a sister password with a small Levenshtein distance.  Examples of passwords along with the synthetic sister passwords are shown in Figure~\ref{tab:passwords}. }

\begin{figure}
	\centering
	\begin{tabular}{|l|l|l|l|l|} 
		\hline
		\multicolumn{1}{|c|}{$\mathbf{X}$} & \multicolumn{1}{c|}{$\mathbf{Y}_{(1)}$} & \multicolumn{1}{c|}{$\mathbf{Y}_{(2)}$} & \multicolumn{1}{c|}{$\mathbf{Y}_{(3)}$} & \multicolumn{1}{c|}{Pooled SI}     \\ 
		\hline
		password                                & wasswgrd               & phssyotd                      & password               & \textcolor[rgb]{0.58,0.067,0}{password}  \\
		iloveyou                                & inoieyou               & izoveyou                      & iloviybv               & \textcolor[rgb]{0.58,0.067,0}{i?oveyou}  \\
		princess                                & prinpess               & pghjcxys                      & wrihness               & \textcolor[rgb]{0.58,0.067,0}{pri??ess}  \\
		rockyou                                 & rockyeu                & rockyou                       & hozkyxu                & \textcolor[rgb]{0.58,0.067,0}{rocky?u}   \\
		nicole                                  & nicoie                 & nbhole                        & zocole                 & \textcolor[rgb]{0.58,0.067,0}{n?cole}    \\
		\hline
	\end{tabular}
\caption{Top 5 lowercase case passwords in the RockYou data. The sister passwords $\mathbf{Y}_{(i)}$ are generated by changing each letter with probability .3 to any lowercase character. The pooled password Side-Information is obtained by taking the letter that appears in more than 50\%  of the sister passwords, and putting an erasure ('?') if no such letter exists..}\label{tab:passwords}
\end{figure}

\rev{
For the sake of exposition, we assume that all letters are equally likely, which is a sub-optimal but illustrative assumption for the purpose of this toy example. Under this assumption, the optimal strategy of an adversary with side information is to modify the sister password one letter at a time, until the correct password is found. Note that, by making use of prior information such as letter frequency, the adversary can improve his guessing strategy drastically -- we refer once again to \cite{wang2016targeted} for an implementation of such guessing strategies. When considering the computational effort (in terms of number of guesses), to recover the password, we can look at two separate scenarios:
\begin{itemize}
		\item A decentralized mechanism, where each agent makes guesses based on its own sister password $\mathbf{Y}_{(i)}$, and the first one to finish determines the computational cost.
		\item A centralized approach, where the sister passwords are pooled. In this case, we assume that any letter that is common in at least 50\% of sister passwords is also in the correct password. Again, this is a sub-optimal guessing strategy, but serves as an illustration. Example of this pooled side-information are shown in Figure~\ref{tab:passwords}.
\end{itemize}}
\rev{In the centralized approach, the quality of the side information is much better, i.e., many of the letters are already correctly recovered, and the remaining sequence to find are only the \emph{erased} symbols. In the decentralized scenario, the side-information is weaker but  there is a benefit in having multiple sources of side-information, as the performance is dominated by the best side-information. The results are showcased in Figure~\ref{fig:exp}, and showcase some of the take-aways from the theoretical analysis to follows. Namely, we see that (1) the presence of sister passwords allows for a greatly reduced computational cost (2) a decentralized approach performs better than a single sister password -- in fact, we will show that this gain is exponential in the analysis that follows, and (3) the centralized approach allows to essentially improve the quality of the side-information, which proves to be a very potent effect. In the rest of the paper, we will show analytically, that for several sources of side-information, a centralized approach with two agents performs asymptotically better than a decentralized approach with any finite number of agents, suggesting that improving the quality of side-information is crucial. 
}

\begin{figure}
	\centering
	\includegraphics[width=.45\textwidth]{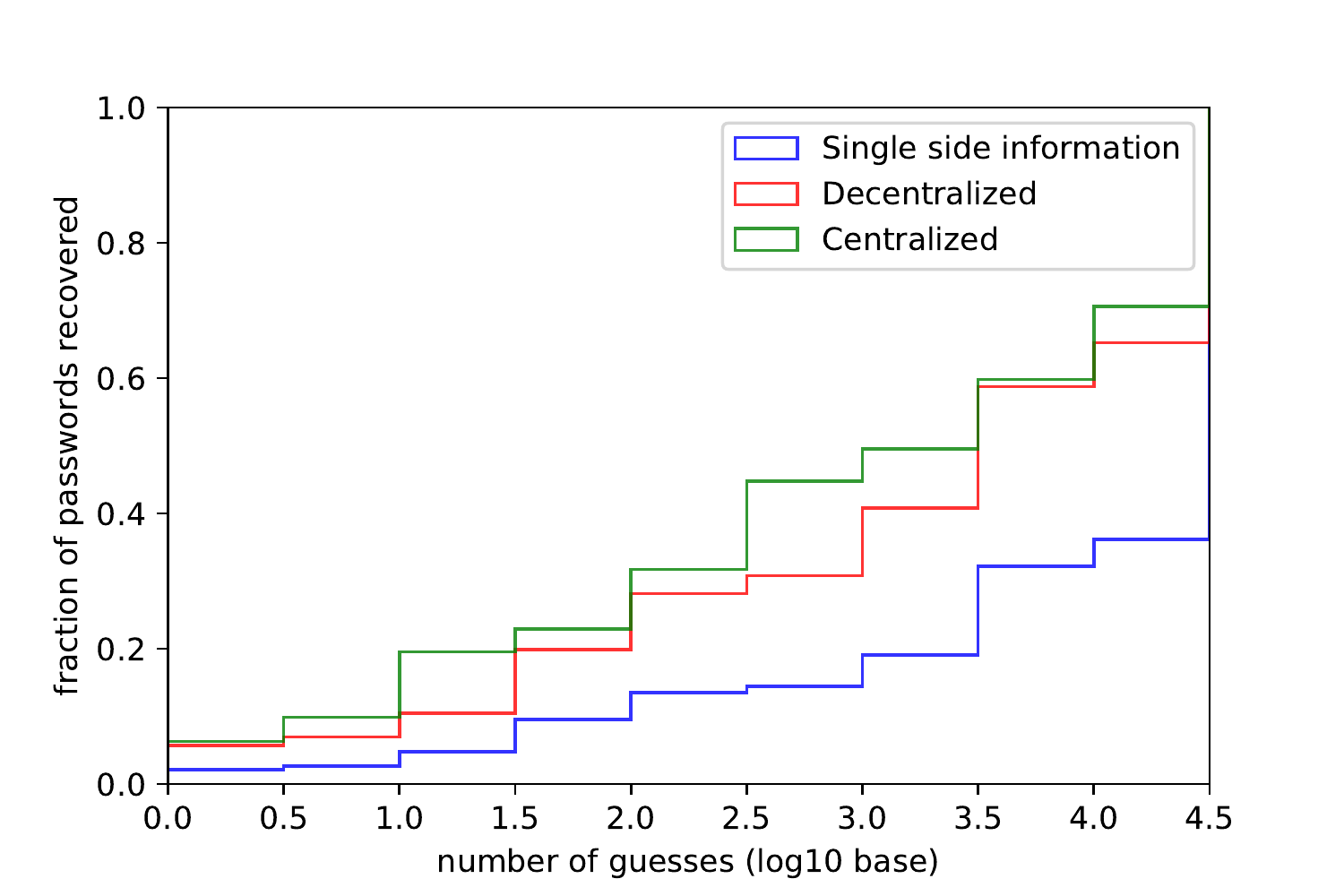}
	\caption{With a centralized mechanism, it takes about 300 guesses to recover 50\% of the passwords. With a decentralized mechanism, it takes several thousand guesses to reach the same performance. Note that an agent with a single side-information, i.e. with a single sister password, recovers only 40\% of the passwords after 30k guesses. }\label{fig:exp}
\end{figure}

\subsection{Centralized Mechanism}\label{sec:centralized}

We illustrate the performance of centralized mechanisms over two side-information channels. First, let $\mathbf{X}$ be a uniformly distributed sequence of binary digits, i.e., $\mathbf{X}$ i.i.d. generated from $\mathrm{Bern}(1/2)$ \footnote{Note that the choice of binary inputs is made for the sake of exposition, and those results can be easily generalized to arbitrary discrete sources.}. We will contrast two types of side-information channels, namely a binary erasure channel (BEC) with parameter $\epsilon$ denoted $\mathrm{BEC}(\epsilon)$, and a binary symmetric channel (BSC) with parameter $\delta$, denoted $\mathrm{BSC}(\delta)$.

We start with the BEC channel. Erasures channels have been studied in \cite{christiansen2013guessing}, where the large deviation principle for the guesswork with erasure side-information was characterized.
This case is simple to analyze because collapsing information is tractable. In particular, the $k$-th entry of $\mathbf{X}$ is erased in all received signals $\mathbf{Y}_i$, $i = 1,\ldots,m$, with probability $\epsilon^m$. Therefore, the resulting collapsed random variable $(\mathbf{Y}_{(1)},\ldots,\mathbf{Y}_{(m)})$ is equivalently described by $\tilde{\mathbf{Y}}$, where $\tilde{\mathbf{Y}}$ is the output of $\mathbf{X}$ through a BEC with erasure probability $\epsilon^m$. We have the following result.

\begin{theorem}[\cite{christiansen2013guessing}]
	For $\mathrm{BEC}(\epsilon)$, and $m$ agents,
	\begin{align}
	E^{(c)}_\rho(\mathrm{BEC}(\epsilon),m) = \max_{\lambda \in [0,1]} \left[ \rho \lambda -  D(\lambda \| \epsilon^m) \right].
	\end{align}
\end{theorem}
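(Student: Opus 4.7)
The plan is to split the argument into two clean stages: first, a data-processing style reduction that collapses the $m$ parallel BEC outputs into a single BEC with a boosted erasure parameter, and second, a method-of-types computation of the guesswork exponent for one BEC observation (which is essentially the result of \cite{christiansen2013guessing}).

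\textbf{Step 1: Collapsing the $m$ erasure channels.} For each coordinate $k\in[1:n]$, the pooled tuple $(Y_{(1),k},\dots,Y_{(m),k})$ is a deterministic function of $X_k$ together with the bits that are erased, and given $X_k$ it reveals nothing beyond whether at least one of the $m$ channels did not erase. Since the channels are independent, the probability that all $m$ erase coordinate $k$ is $\epsilon^m$, and otherwise the pooled information unambiguously identifies $X_k$. Hence there is a sufficient statistic $\tilde{\bY}$ of $(\bY_{(1)},\dots,\bY_{(m)})$ that is distributed exactly as the output of $\bX$ through a single $\mathrm{BEC}(\epsilon^m)$. Because the optimal guessing function depends on the side-information only through a sufficient statistic, we get
\[
\bE\!\left[G^*(\bX\mid \bY_{(1)},\dots,\bY_{(m)})^\rho\right]=\bE\!\left[G^*(\bX\mid \tilde{\bY})^\rho\right],
\]
and the problem reduces to evaluating $E_\rho(P_X,P_{Y|X})$ for a $\mathrm{Bern}(1/2)$ source and a single $\mathrm{BEC}(\epsilon^m)$.

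\textbf{Step 2: Guesswork under a single BEC.} Condition on the random number of erasures $K=|\{k: \tilde Y_k=?\}|\sim\mathrm{Bin}(n,\epsilon^m)$. Given $K=k$ and the positions of the erasures, the posterior of $\bX$ is uniform over the $2^k$ strings that agree with $\tilde{\bY}$ on the unerased coordinates. The optimal guessing function lists those $2^k$ candidates in arbitrary order, so
\[
\bE\!\left[G^*(\bX\mid\tilde{\bY})^\rho\,\middle|\,K=k\right]=\frac{1}{2^k}\sum_{j=1}^{2^k} j^\rho \doteq \frac{2^{k\rho}}{1+\rho}.
\]
Taking expectation over $K$ and writing $k=\lambda n$,
\[
\bE\!\left[G^*(\bX\mid\tilde{\bY})^\rho\right]\doteq\sum_{k=0}^{n}\binom{n}{k}(\epsilon^m)^{k}(1-\epsilon^m)^{n-k}\,2^{k\rho}.
\]
Using the standard binomial large-deviations estimate $\binom{n}{\lambda n}(\epsilon^m)^{\lambda n}(1-\epsilon^m)^{(1-\lambda)n}\doteq\exp\!\big(-n D(\lambda\|\epsilon^m)\big)$ and $2^{k\rho}=\exp(n\rho\lambda)$ under the base-$2$ convention of \eqref{ArikanResult}, the sum is dominated by its largest term via Laplace's method, yielding
\[
\lim_{n\to\infty}\frac{1}{n}\log\bE\!\left[G^*(\bX\mid\tilde{\bY})^\rho\right]=\max_{\lambda\in[0,1]}\bigl[\rho\lambda - D(\lambda\|\epsilon^m)\bigr],
\]
which is exactly $E^{(c)}_\rho(\mathrm{BEC}(\epsilon),m)$.

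\textbf{Main obstacle and remarks.} There is essentially no analytic obstacle: the reduction in Step~1 is elementary once one observes that the pool of erasure-channel outputs is a sufficient statistic with i.i.d.\ Bernoulli erasure pattern, and Step~2 is a textbook Laplace/method-of-types computation, matching \cite{christiansen2013guessing}. The only care needed is (i) checking that the combinatorial prefactor $\binom{n}{k}^{-1}$ accounting for erasure positions cancels in the exponent, so only the binomial tail and the $2^{k\rho}$ term survive, and (ii) verifying uniform-in-$\lambda$ control of the approximation $\sum_{j=1}^{2^k} j^\rho \asymp 2^{k\rho}/(1+\rho)$ so that the summation over $k$ and the maximization over $\lambda$ can be exchanged, which is standard since both bounds are within subexponential factors.
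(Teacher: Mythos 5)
Your proposal is correct and follows essentially the same route as the paper: the paper likewise collapses the $m$ independent erasure patterns into a single $\mathrm{BEC}(\epsilon^m)$ (since a coordinate survives unless all $m$ channels erase it) and then invokes the single-channel guesswork exponent from \cite{christiansen2013guessing}, which is exactly the Laplace/method-of-types computation you carry out in Step~2. The only point worth flagging is the one you already noted: the $\rho\lambda$ term in the stated exponent implicitly carries a $\log|\mathcal{X}|$ factor (consistent with the corollary value $\log(1+\epsilon^m)$), so your base-$2$ bookkeeping is the right reading.
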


Carrying out the maximization for $\rho = 1$, we get the following immediate result.
\begin{corollary}
	For $\rho = 1$,
	\begin{align}
	E^{(c)}_1(\mathrm{BEC}(\epsilon),m) = \log \left( 1 + \epsilon^m \right).
	\end{align}
\end{corollary}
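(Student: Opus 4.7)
The corollary is a direct consequence of evaluating the Theorem at $\rho = 1$, so the plan is to carry out that one-dimensional concave maximization in closed form. Write $p \triangleq \epsilon^m$ for brevity, and define
\begin{equation}
f(\lambda) \;\triangleq\; \lambda \;-\; D(\lambda\|p) \;=\; \lambda \;-\; \lambda\log\tfrac{\lambda}{p} \;-\; (1-\lambda)\log\tfrac{1-\lambda}{1-p}, \qquad \lambda\in[0,1].\nonumber
\end{equation}
Since $D(\,\cdot\,\|\,p)$ is strictly convex in its first argument, $f$ is strictly concave on $[0,1]$, and the maximum is attained at the unique stationary point $\lambda^\star$ in the interior; $f'(\lambda^\star)=0$ suffices.

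\textbf{Finding the optimizer.} Differentiating $f$ and collecting the logarithmic terms, the equation $f'(\lambda)=0$ reduces to $\log\!\bigl(\tfrac{(1-\lambda)\,p}{(1-p)\,\lambda}\bigr) = -c$, where $c$ is the constant contributed by the derivative of the linear term $\lambda$ (equal to $1$ in base~$2$, and equal to $\ln 2$ or absorbed into the theorem's normalization if the natural logarithm is being used). Exponentiating yields a linear equation in $\lambda$ whose solution is the clean expression
\begin{equation}
\lambda^\star \;=\; \frac{2p}{1+p}, \qquad 1-\lambda^\star \;=\; \frac{1-p}{1+p}, \qquad \frac{\lambda^\star}{p} \;=\; \frac{2}{1+p}, \qquad \frac{1-\lambda^\star}{1-p} \;=\; \frac{1}{1+p}.\nonumber
\end{equation}

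\textbf{Back-substitution.} The final step is to plug $\lambda^\star$ into $f$ and simplify. Using the ratios above, the two negative log terms become $-\lambda^\star\bigl[\log 2 - \log(1+p)\bigr] + (1-\lambda^\star)\log(1+p)$; the explicit $\log 2$ piece cancels the linear term $\lambda^\star$ (again via the same base-matching constant), the two $\log(1+p)$ contributions combine through $\lambda^\star + (1-\lambda^\star) = 1$, and one is left with $f(\lambda^\star) = \log(1+p) = \log(1+\epsilon^m)$, which is the claim. The only obstacle here is careful log bookkeeping; nothing else is needed beyond the Theorem.

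\textbf{Sanity check.} As an independent verification one can bypass the variational formula entirely: pooling $m$ independent $\mathrm{BEC}(\epsilon)$ outputs is equivalent to a single $\mathrm{BEC}(\epsilon^m)$, so conditional on having exactly $K$ erasures the uniform binary source leaves a uniform posterior on $2^K$ strings and $\mathbb{E}[G^\star(\mathbf{X}|\tilde{\mathbf{Y}}) \mid K] = (2^K+1)/2$. Averaging over $K\sim\mathrm{Bin}(n,\epsilon^m)$ gives $\mathbb{E}[G^\star] = \tfrac{1}{2}\bigl((1+\epsilon^m)^n + 1\bigr)$, and $\tfrac{1}{n}\log$ of this tends to $\log(1+\epsilon^m)$, matching the optimization result.
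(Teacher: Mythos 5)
Your proposal is correct and follows the same route as the paper, which simply states that the result is obtained by ``carrying out the maximization'' of Theorem~1 at $\rho=1$; you supply the details (the stationary point $\lambda^\star = 2\epsilon^m/(1+\epsilon^m)$ and the back-substitution), handling the base-of-logarithm normalization correctly so that the value comes out to $\log(1+\epsilon^m)$. Your closing sanity check via the exact binomial computation $\mathbb{E}[G^\star]=\tfrac{1}{2}\bigl((1+\epsilon^m)^n+1\bigr)$ is a nice independent confirmation, though not part of the paper's argument.
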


\begin{remark}
	The function $f(x) = \log(1 + x^m)$ over $x \in [0,1]$, is convex for any $m \geq 2$. Moreover, as the number of agents increases, the exponent tends towards a flat function $E^{\text{(c)}}_\rho = 0$, with a discontinuity at $\epsilon = 1$. Finally, since the first derivative (when $\rho = 1$) is $m\frac{\epsilon^{m-1}}{1+\epsilon^{m}}$ for any $m \geq 2$, the centralized curve starts flat with a negligible exponent for small $\epsilon$.
\end{remark}

For the BSC, the centralized mechanism is more involved to analyze. Indeed, we cannot describe the channel resulting from collapsing multiple BSC's in terms of a single BSC anymore, since one has $m$ noisy measurements per password-bit. Nevertheless, for $m = 2$, we can characterize precisely this channel by considering the $2^m = 4$ cases. We shall then discuss how to generalize this result to arbitrary $m > 2$.

\begin{theorem}\label{thm:centralized_BSC}
	For $\mathrm{BSC}(\delta)$, and $m = 2$,
	\begin{align}
	E_\rho^{(c)}(\mathrm{BSC}(\delta),2) = \sup_{\lambda \in [0,1]} &\left\{ \rho\lambda H_{1/1+\rho} \left( \frac{\delta^2}{1 - 2\delta(1-\delta)}\right) \right. \nonumber\\*
	&\left. + \rho (1-\lambda) - D\left(\lambda \| 2\delta(1-\delta) \right) \right\} \nonumber .
	\end{align}
\end{theorem}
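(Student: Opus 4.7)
My plan is to collapse the two independent $\mathrm{BSC}(\delta)$ observations into a single effective memoryless channel and then evaluate the resulting guesswork exponent by the method of types. Because $X$ is uniform Bernoulli and the two channels are identical, the posterior $P(X\mid Y_{1},Y_{2})$ depends on the pair only through whether $Y_{1}=Y_{2}$. A direct computation gives: when $Y_{1}\neq Y_{2}$ (marginal probability $2\delta(1-\delta)$) the posterior is $\mathrm{Bern}(1/2)$---an ``erasure''---while when $Y_{1}=Y_{2}$ (probability $1-2\delta(1-\delta)$) the posterior is Bernoulli with crossover $p^{\star}:=\delta^{2}/(1-2\delta(1-\delta))$ against the common value. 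Hence the pair of BSCs collapses to a memoryless binary symmetric erasure channel with erasure probability $2\delta(1-\delta)$ and residual crossover $p^{\star}$.

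Next I would partition the $n$ coordinates into the random erasure set $\calE\subseteq[1:n]$ of positions where $Y_{(1),i}\neq Y_{(2),i}$, and its complement. Conditional on $\calE$ and on $(\mathbf{Y}_{(1)},\mathbf{Y}_{(2)})$, the posterior of $\mathbf{X}$ is a product measure: on $\calE$ it is i.i.d.\ $\mathrm{Bern}(1/2)$ (independent of $\mathbf{Y}$), while on $\calE^{c}$ it is i.i.d.\ $\mathrm{Bern}(p^{\star})$ up to a per-coordinate known bit flip read off from $\mathbf{Y}$. Since the uniform factor contributes a position-independent constant $(1/2)^{|\calE|}$ to every candidate's posterior mass, the optimal conditional guessing function enumerates, in decreasing order of the $\calE^{c}$-marginal, all $2^{|\calE|}$ completions of each $\calE^{c}$-pattern. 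Writing $|\calE|=\lambda n$ and applying Arikan's single-letter exponent to the $\mathrm{Bern}(p^{\star})$ block yields
\begin{align}
\mathbb{E}\!\left[G^{\ast}(\mathbf{X}\mid\mathbf{Y}_{(1)},\mathbf{Y}_{(2)})^{\rho}\,\bigm|\,|\calE|=\lambda n\right]\doteq\exp\!\Bigl(n\rho\lambda\log 2+n\rho(1-\lambda)H_{1/(1+\rho)}(p^{\star})\Bigr).\nonumber
\end{align}

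Since $|\calE|$ is $\mathrm{Binomial}(n,2\delta(1-\delta))$, the large-deviation rate for $|\calE|/n=\lambda$ is $D(\lambda\|2\delta(1-\delta))$. Averaging the conditional $\rho$-th moment over $\lambda$ and invoking Laplace's principle (the integrand is exponential in $n$ with an exponent affine in $\lambda$, and the rate function is convex on $[0,1]$) gives
\begin{align}
E_{\rho}^{(c)}(\mathrm{BSC}(\delta),2)=\sup_{\lambda\in[0,1]}\Bigl\{\rho\lambda\log 2+\rho(1-\lambda)H_{1/(1+\rho)}(p^{\star})-D\!\left(\lambda\|2\delta(1-\delta)\right)\Bigr\}.\nonumber
\end{align}
The change of variable $\lambda\mapsto 1-\lambda$, together with the identity $D(1-\lambda\|q)=D(\lambda\|1-q)$ and the convention $\log 2=1$, brings this expression into the form stated in the theorem.

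The step I expect to be the main obstacle is justifying the factorization of the optimal conditional guessing function across the erased and informative blocks: one must verify that the ``$\calE^{c}$-first, then $\calE$'' lexicographic enumeration is asymptotically optimal, i.e.\ that sorting the full joint posterior does not beat this block-wise ordering in exponent. This can be settled by a standard method-of-types bound showing that the number of sequences of a given joint type factors, up to subexponential prefactors, into the product of the corresponding marginal-type counts, so that the two block guesswork exponents simply add. The saddle-point and Legendre-transform steps are then routine, and the uniform binary input plays no further role beyond making the erased-coordinate posterior Bernoulli-$1/2$.
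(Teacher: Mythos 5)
Your proposal follows essentially the same route as the paper's proof: split coordinates into agreement/disagreement positions, observe that agreements collapse to a $\mathrm{BSC}\bigl(\delta^2/(1-2\delta(1-\delta))\bigr)$ and disagreements to erasures, apply the concatenated-source guesswork exponent (the paper's Lemma on concatenation with a uniform block), and optimize over the fraction $\lambda$ via the binomial large-deviation rate. One caveat: your final ``change of variable'' step does not actually land on the theorem as printed --- substituting $\lambda \mapsto 1-\lambda$ in your expression yields $D\left(\lambda \,\middle\|\, 1-2\delta(1-\delta)\right)$, not $D\left(\lambda \,\middle\|\, 2\delta(1-\delta)\right)$. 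This is not a flaw in your derivation but an inconsistency in the stated theorem (its divergence term pairs $\lambda$ with the disagreement probability while its entropy terms treat $\lambda$ as the agreement fraction); your formula is the one consistent with the corollary, since at $\rho=1$ it evaluates to $\log(1+4\delta(1-\delta))$ and correctly gives exponent $0$ at $\delta=0$, whereas the literal statement does not. You should simply state your version as the result rather than claim equivalence with the printed form.
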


\begin{corollary}
	For $\rho = 1$, 
	\begin{align}
	E_1^{(c)}(\mathrm{BSC}(\delta),2) = \log(4\delta(1 - \delta) + 1).
	\end{align}
\end{corollary}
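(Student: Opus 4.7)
The plan is to substitute $\rho = 1$ into the expression of Theorem~\ref{thm:centralized_BSC} and carry out the resulting one-dimensional maximization in closed form.

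First, the R\'enyi entropy appearing in the statement can be evaluated in closed form. Writing $q \triangleq 2\delta(1-\delta)$, so that $1-q = \delta^2+(1-\delta)^2$, and $p \triangleq \delta^2/(1-q)$, I would apply the identity $H_{1/2}(\mathrm{Bern}(p)) = 2\log(\sqrt{p}+\sqrt{1-p})$. A direct computation gives
$$\sqrt{p}+\sqrt{1-p} \;=\; \frac{\delta+(1-\delta)}{\sqrt{1-q}} \;=\; \frac{1}{\sqrt{1-q}},$$
so that $H_{1/2}(p) = -\log(1-q)$.

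Next, I would plug this simplification into the objective at $\rho=1$ and expand the divergence as $D(\lambda\|q)=\lambda\log(\lambda/q)+(1-\lambda)\log((1-\lambda)/(1-q))$. The $\log(1-q)$ contributions from the R\'enyi-entropy term and from $-D(\lambda\|q)$ combine, and after elementary algebra the objective collapses to the compact form
$$f(\lambda) \;=\; (1-\lambda)\log(2q) + H(\lambda),$$
where $H(\cdot)$ denotes the binary entropy. This is strictly concave in $\lambda$ on $(0,1)$; differentiating and setting the derivative to zero yields the first-order condition $\log((1-\lambda)/\lambda) = \log(2q)$, hence the interior optimizer $\lambda^\star = 1/(1+2q)$.

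Substituting $\lambda^\star$ back and using the elementary identity
$$H\!\left(\frac{1}{1+x}\right) + \frac{x}{1+x}\log x \;=\; \log(1+x),\qquad x>0,$$
with $x=2q$, the maximum value telescopes to $\log(1+2q) = \log\p{1+4\delta(1-\delta)}$, which is the claim.

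The only delicate step will be the algebraic bookkeeping that reduces the objective to $(1-\lambda)\log(2q)+H(\lambda)$: several $\log(1-q)$ terms coming from the R\'enyi entropy, from the constant piece $\rho(1-\lambda)$ at $\rho=1$, and from the KL divergence must be tracked carefully before they cancel. Once that reduction is in place, the remaining optimization is standard, and the boundary cases $\delta\in\{0,1/2\}$ can be verified directly as sanity checks, yielding $E_1^{(c)}=0$ and $E_1^{(c)}=\log 2$ respectively, both consistent with $\log(1+4\delta(1-\delta))$.
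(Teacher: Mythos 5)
Your overall route is the right one and is essentially what the paper leaves implicit (the corollary is stated without proof, as the result of carrying out the maximization in Theorem~\ref{thm:centralized_BSC}): the closed-form evaluation $H_{1/2}\bigl(\delta^2/(1-2\delta(1-\delta))\bigr)=-\log\bigl(1-2\delta(1-\delta)\bigr)$ is correct and is the key simplification, your final optimization via the identity $\sup_{\mu}\left\{\mu\log a+H(\mu)\right\}=\log(1+a)$ is clean, and the boundary checks at $\delta\in\{0,1/2\}$ are consistent.

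However, the ``delicate step'' you defer is exactly where the argument as written breaks: the cancellation of the $\log(1-q)$ terms does \emph{not} occur if you substitute into the formula of Theorem~\ref{thm:centralized_BSC} verbatim. With $q=2\delta(1-\delta)$, the verbatim objective at $\rho=1$ is $-\lambda\log(1-q)+(1-\lambda)\log 2-D(\lambda\|q)$, and expanding the divergence leaves a residual $(1-2\lambda)\log(1-q)$ rather than your claimed form $(1-\lambda)\log(2q)+H(\lambda)$; indeed the verbatim supremum is strictly larger than $\log(1+4\delta(1-\delta))$ (evaluate the objective at $\lambda=q$). Your reduction is obtained only after recognizing that in the theorem $\lambda$ denotes the fraction of positions where the two observations \emph{agree}, an event of per-letter probability $1-2\delta(1-\delta)$, so the large-deviations penalty must be read as $D(\lambda\|1-2\delta(1-\delta))$ (equivalently, the roles of $\lambda$ and $1-\lambda$ in the first two terms must be swapped); with that reading the $\log(1-q)$ contributions do cancel, the objective becomes $(1-\lambda)\log(2q)+H(\lambda)$, and everything you wrote afterwards goes through with $\lambda^{\star}=1/(1+2q)$ the agreement fraction. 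You need to make this correction explicit rather than asserting a cancellation that fails for the printed formula; once you do, your computation is complete and yields the stated value $\log\bigl(1+4\delta(1-\delta)\bigr)$.
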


\begin{proof}[Proof of Theorem~\ref{thm:centralized_BSC}]
	Denote by $\mathbf{Y}_{(1)}$ and $\mathbf{Y}_{(2)}$ the sequence of side information observed by each agent. For each bit position, there are two cases: either $\mathbf{Y}_{(1)}$ and $\mathbf{Y}_{(2)}$ agree and have the same value at that position, or they disagree. Without loss of generality, we assume that all agreements appear consecutively with the disagreements following. In the first part, $\mathbf{Y}_{(1)}$ and $\mathbf{Y}_{(2)}$ agree and have the same bit in every position. A simple application of Bayes' rule yields
	\begin{align}
	& P_{X|Y_1,Y_2}( 0 | (0,0) ) = P_{X|Y_1,Y_2}(1|(1,1)) = \frac{(1-\delta)^2}{\delta^2 + (1-\delta)^2} , \\
		&P_{X|Y_1,Y_2}( 1 | (0,0) ) = P_{X|Y_1,Y_2}(0|(1,1)) = \frac{\delta^2}{\delta^2 + (1-\delta)^2} .
	\end{align}
	 that is on this subsequence, the joint side-information $(\mathbf{Y}_{(1)},\mathbf{Y}_{(2)})$ can be equivalently represented by a binary vector $\tilde{\mathbf{Y}}$ which is the result of a BSC with parameter $\delta^2/(1 - 2\delta(1-\delta))$.

	In the second part, $\mathbf{Y}_{(1)}$ and $\mathbf{Y}_{(2)}$ disagree and have contradicting bits in every position. We then have
	\begin{align}
		P_{X|Y_1,Y_2}( 0 | (0,1) ) = P_{X|Y_1,Y_2}(1|(0,1)) = \frac{1}{2},
	\end{align}
	and, 
	\begin{align}
	P_{X|Y_1,Y_2}( 0 | (1,0) ) = P_{X|Y_1,Y_2}(1|(1,0)) = \frac{1}{2},
	\end{align}
	 which is essentially an erasure, since both values of $X$ are equally likely. 
	 We let $\lambda \in [0,1]$ be the fraction of bits over which $\mathbf{Y}_{(1)}$ and $\mathbf{Y}_{(2)}$ agree, \textit{i.e.}, $\lambda n$ is the size of the first subsequence defined above. 
	 Therefore, the central authority has to guess a sequence of the type $\tilde{\mathbf{X}}_n = (\tilde{\mathbf{U}}_{n(1- \lambda)}, \tilde{\mathbf{Z}}_{n \lambda})$, where $\tilde{\mathbf{U}}_{n(1- \lambda)}$ is an i.i.d. sequence of uniform Bernoulli random variables that correspond to the erasures, and $\mathbf{Z}_{n \lambda}$ is an i.i.d. sequence of Bernoulli random variables with parameter $\tilde{\delta} \triangleq \delta^2/(1 - 2\delta(1-\delta))$ which corresponds to the bit-flips. By Lemma~\ref{lem:concatenation_uniform} in the Appendix, we have that
	\begin{align}
	\lim_{n \to \infty} \frac{1}{n} \log \mathbb{E}[G(\tilde{\mathbf{X}})^\alpha] = \lambda \alpha + (1 - \lambda)\alpha H_{1/1+\alpha}(\tilde{\delta}).
	\end{align}
	Noting that the probability of the subsequence of agreements of length $\lambda n$ is (up to polynomial factors) $\exp \left\{-n D(\lambda \| 2\delta(1 - \delta)) \right\}$, we get the desired optimization.
\end{proof}

The previous theorem only treats the case of $m = 2$ agents, although a similar technique can be used to tackle any $m \geq 2$ number of agents. 
Unfortunately, this method is intractable for large $m$. 
However, the following result allows us to compute the limit as the number of agents grows to infinity.
\begin{lemma}
	Assume $\delta \neq \frac{1}{2}$. Then:
	\begin{align}
	\lim_{m \to \infty} E_{\rho}^{(c)} (\mathrm{BSC}(\delta),m) = 0.
	\end{align}
\end{lemma}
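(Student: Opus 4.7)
The plan is to upper bound the centralized guesswork by the guesswork achievable with a \emph{suboptimal} side-information obtained by applying a bit-wise majority vote to the $m$ independent BSC outputs, and then invoke Arikan's formula together with the fact that a majority-vote estimator is asymptotically error-free as $m \to \infty$.

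First I would observe a monotonicity property: for any (possibly randomized) function $f$ of $(\mathbf{Y}_{(1)},\ldots,\mathbf{Y}_{(m)})$, guessing $\mathbf{X}$ given all $m$ observations cannot be worse than guessing $\mathbf{X}$ given only $f(\mathbf{Y}_{(1)},\ldots,\mathbf{Y}_{(m)})$, so
\begin{align}
\mathbb{E}\!\left[G^*(\mathbf{X}\mid \mathbf{Y}_{(1)},\ldots,\mathbf{Y}_{(m)})^\rho\right] \leq \mathbb{E}\!\left[G^*(\mathbf{X}\mid f(\mathbf{Y}_{(1)},\ldots,\mathbf{Y}_{(m)}))^\rho\right]. \nonumber
\end{align}
I would take $f$ to be componentwise majority vote (ties broken arbitrarily); call the resulting vector $\mathbf{Z}$. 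Since the original side-information channel is memoryless and symmetric, $\mathbf{Z}$ is the output of $\mathbf{X}$ through a $\mathrm{BSC}(p_m)$, where $p_m \triangleq \Pr\{\text{majority of $m$ i.i.d.\ Bern}(\delta) \text{ is } 1\}$.

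Next, I would apply Arikan's formula (the statement in \eqref{ArikanResult} together with its conditional version quoted just above Section~\ref{sec:coordination}) to the reduced problem with side-information $\mathbf{Z}$. Because $\mathbf{X}$ is uniform Bern$(1/2)$ and the reduced channel is a BSC$(p_m)$, the posterior $P_{X\mid Z=z}$ is Bern$(p_m)$ (or Bern$(1-p_m)$) for each $z$, so
\begin{align}
\lim_{n\to\infty}\frac{1}{n}\log \mathbb{E}\!\left[G^*(\mathbf{X}\mid \mathbf{Z})^\rho\right] \;=\; \rho \, H_{1/(1+\rho)}\!\left(\mathrm{Bern}(p_m)\right). \nonumber
\end{align}
Combined with the monotonicity inequality above, this yields $E_\rho^{(c)}(\mathrm{BSC}(\delta),m)\le \rho H_{1/(1+\rho)}(\mathrm{Bern}(p_m))$.

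Finally I would drive $m\to\infty$. Since $\delta\neq 1/2$, Hoeffding's (or Chernoff's) inequality gives $\min\{p_m,\,1-p_m\} \leq e^{-m \,c(\delta)}$ for some constant $c(\delta)>0$, hence $p_m$ tends either to $0$ or to $1$ as $m\to\infty$. Because $H_{1/(1+\rho)}(\mathrm{Bern}(p))$ is continuous in $p$ on $[0,1]$ and equals $0$ at $p\in\{0,1\}$, we conclude $H_{1/(1+\rho)}(\mathrm{Bern}(p_m)) \to 0$. Together with the trivial lower bound $E_\rho^{(c)}\geq 0$ (since $G^*\geq 1$), this gives $\lim_{m\to\infty} E_\rho^{(c)}(\mathrm{BSC}(\delta),m)=0$. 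The main obstacle is minor bookkeeping — ensuring that the interchange of the limits in $n$ and $m$ is legitimate, which is avoided here by taking the limit in $n$ first (for each fixed $m$, Arikan's theorem applies) and only then letting $m\to\infty$, exactly the order of limits discussed in the paragraph preceding Remark~\ref{remark:oblivious}.
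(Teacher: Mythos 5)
Your proposal is correct and follows essentially the same route as the paper: a componentwise majority-vote pre-processing of the $m$ side-information sequences, the data-processing (monotonicity) bound on the guesswork, the observation that the resulting channel is a $\mathrm{BSC}$ whose crossover probability vanishes exponentially in $m$ by Chernoff/Hoeffding, and then the limit $n\to\infty$ followed by $m\to\infty$. Your version is, if anything, slightly more explicit than the paper's in spelling out Arikan's formula for the reduced channel and the continuity of the R\'enyi entropy at the endpoints.
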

\begin{proof}
    Without loss of generality, let $\delta < 1/2$.
	For a fixed $n$ and $m$, we do a deterministic pre-processing on the sequences $\mathbf{Y}_{(1)}, \ldots, \mathbf{Y}_{(m)}$, which can only increase the guesswork, by definition. 
	We let $\hat{Y}_k$ be defined as the majority bit among the received side information sequences at index $k$, that is,
	\begin{align}
	\hat{Y}_k\triangleq \left\{
	\begin{array}{ll}
	0, & \text{if } N_k(0) \geq N_k(1) \\
	1, & \text{if } N_k(0) < N_k(1)
	\end{array} \right.
	\end{align} 
	where $N_k(0) \triangleq \sum_{j = 1}^m Y_{(j),k}$, $Y_{(j),k}$ is the $k$-th bit of the sequence $\mathbf{Y}_{(j)}$, and $N_k(1) \triangleq n - N_k(0)$. 
	Then, it is easy to see that the sequence $\hat{\mathbf{Y}} \triangleq (\hat{Y}_1,\ldots,\hat{Y}_n)$ is the output of $\mathbf{X}$ through a BSC with parameter $\delta_m$, such that $\delta_m \to 0$ as $m \to \infty$, for any $\delta < 1/2$ \footnote{A bound on $\delta_m$ can be obtained by an application of Chernoff bound, i.e., $\delta_m < e^{-nD(1/2\| \delta)}$}. Therefore, for any $n$ and, fixed $m$, the following equations hold:
	\begin{align}
	\mathbb{E}[G(\mathbf{X}|\tilde{\mathbf{Y}})^\rho] &\leq \mathbb{E}[G(\mathbf{X}|\hat{\mathbf{Y}})^\rho] ,\\
	\implies E_\rho^{\text{(c)}} (\mathrm{BSC}(\delta),m) &\leq E_\rho(\mathrm{BSC}(\delta_m)) ,\\
	\implies \lim_{m \to \infty} E_\rho^{\text{(c)}} (\mathrm{BSC}(\delta),m) &\leq \lim_{m \to \infty} E_\rho(\mathrm{BSC}(\delta_m)).
	\end{align}
	Since the right hand side of the last inequality converges to $0$, for any $\delta < \frac{1}{2}$, we obtain the desired result.
\end{proof}
In other words, when $m$ is large enough, one can \emph{estimate} each bit of the password based on the noisy observations.

\subsection{Decentralized Mechanism}\label{sec:decentralized}

We now study the number of guesses per adversary under the decentralized approach. Our main result, presented below, gives an asymptotic single letter formula for \eqref{eq:decentralized_exp}.

\begin{theorem}\label{thm:decentralized}
	Let $\mathbf{X}$ be generated i.i.d. from $P_X$. Then,
	\begin{align}
	&\lim_{n\to\infty} \frac{1}{n} \log \mathbb{E}\left[\min_{i = 1, \ldots,m} G(\mathbf{X}|\mathbf{Y}_{(i)})^\rho \right] =  \nonumber \\*
	 & \sup_{\alpha \in [0, 1]} \hspace{-5em}\sup_{\subalign{\hspace{5.5em} & \hat{P}_{X,Y} \\ & \text{subject to} \; \hat{P}_{X|Y} \notin \mathcal{Q}\left(\alpha,\hat{P}_Y \right)}}\hspace{-5em}  \rho \cdot \alpha - D(\hat{P}_X || P_X) - m D(\hat{P}_{Y|X} || P_{Y|X} | \hat{P}_X) \label{eq:decentralized_closedform}
	\end{align}
	where $\mathcal{Q}(\alpha,\hat{P}_{\mathbf{y}})$ is defined as
	\begin{align*}
	\mathcal{Q}(\alpha,\hat{P}_Y) \triangleq \left\{ Q_{X|Y} : D(Q_{X|Y}\| P_{X|Y}|\hat{P}_Y) + H(Q_{X|Y}|\hat{P}_Y) \right. \\
	\left. <D(Q^*_{X|Y}\| P_{X|Y}|\hat{P}_Y) + H(Q^*_{X|Y}|\hat{P}_Y) \right\},
	\end{align*}
	with $Q^*_{X|Y}$ being the solution of the optimization problem
	\begin{equation}\label{eq:threshold_prob}
	\begin{aligned}
	& \underset{Q_{X|Y}}{\text{minimize}}
	& & D(Q_{X|Y} \| P_{X|Y} | \hat{P}_Y) + H(Q_{X|Y}|\hat{P}_Y) \\
	& \text{subject to}
	& & H(Q_{X|Y}|\hat{P}_Y) \geq \alpha .
	\end{aligned}
	\end{equation}
\end{theorem}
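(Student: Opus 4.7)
The plan is to apply the method of types combined with a large-deviations analysis of the minimum of $m$ i.i.d.\ observations, first identifying a single-letter rank exponent, then optimizing over joint types. The factor $m$ in the theorem will emerge from the product structure of the $m$ conditionally-independent side-information sequences.

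First I would group realizations by the type $\hat{P}_X$ of $\mathbf{X}$ and the conditional types $\hat{P}^{(i)}_{Y|X}$ of $\mathbf{Y}_{(i)}$ given $\mathbf{X}$. Since the $\mathbf{Y}_{(i)}$'s are conditionally i.i.d.\ given $\mathbf{X}$, the joint type-event probability factorizes as
\begin{align*}
\Pr\!\left(\mathbf{X}\in T(\hat{P}_X),\,\mathbf{Y}_{(i)}\in T(\hat{P}^{(i)}_{Y|X}|\mathbf{X})\ \forall i\right) \doteq \exp\Bigl\{-n\Bigl[D(\hat{P}_X\|P_X) + \sum_{i=1}^m D(\hat{P}^{(i)}_{Y|X}\|P_{Y|X}|\hat{P}_X)\Bigr]\Bigr\}.
\end{align*}
For the rank, observe that $-\tfrac{1}{n}\log P_{\mathbf{X}|\mathbf{Y}}(\mathbf{x}|\mathbf{y}) = D(\hat{P}_{X|Y}\|P_{X|Y}|\hat{P}_Y) + H(\hat{P}_{X|Y}|\hat{P}_Y)$ (call this the ``penalty'' of $\hat{P}_{X|Y}$), so the optimal list orders sequences in increasing penalty. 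Summing conditional-type class sizes $\doteq\exp\{nH(Q_{X|Y}|\hat{P}_Y)\}$ over types with penalty no larger than that of $\hat{P}_{X|Y}$ gives $\tfrac{1}{n}\log G^*(\mathbf{x}|\mathbf{y}) \to \alpha(\hat{P}_{X,Y})$, where $\alpha(\hat{P}_{X,Y})$ is the largest $H(Q_{X|Y}|\hat{P}_Y)$ over such $Q$. Inverting this relation yields the equivalence ``$\alpha(\hat{P}_{X,Y})\geq\alpha \Leftrightarrow \hat{P}_{X|Y}\notin\mathcal{Q}(\alpha,\hat{P}_Y)$'' exactly via the auxiliary program~\eqref{eq:threshold_prob}, so the condition in the theorem is precisely the condition that the rank exponent of $\hat{P}_{X,Y}$ is at least $\alpha$.

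Substituting back,
\begin{align*}
\mathbb{E}\!\left[\min_i G^*(\mathbf{X}|\mathbf{Y}_{(i)})^\rho\right] \doteq \sup_{\hat{P}_X,\,\hat{P}^{(1)},\dots,\hat{P}^{(m)}} \exp\Bigl\{n\Bigl[\rho\min_i\alpha(\hat{P}^{(i)}) - D(\hat{P}_X\|P_X) - \sum_{i=1}^m D(\hat{P}^{(i)}_{Y|X}\|P_{Y|X}|\hat{P}_X)\Bigr]\Bigr\}.
\end{align*}
Parametrizing by $\alpha=\min_i\alpha(\hat{P}^{(i)})$ decouples the constraint: every $\hat{P}^{(i)}_{Y|X}$ must satisfy $\alpha(\hat{P}^{(i)})\geq\alpha$, and because the divergences enter additively with each depending only on its own $\hat{P}^{(i)}_{Y|X}$, the sum is minimized by taking all $\hat{P}^{(i)}_{Y|X}$ equal to the common minimizer $\hat{P}^{\star}_{Y|X}$ of $D(Q\|P_{Y|X}|\hat{P}_X)$ subject to $\alpha(Q_{X,Y})\geq\alpha$. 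This yields $\sum_i D^{(i)} = mD(\hat{P}^{\star}_{Y|X}\|P_{Y|X}|\hat{P}_X)$, producing the factor $m$. Relabeling $(\hat{P}_X,\hat{P}^{\star}_{Y|X}) \leftrightarrow \hat{P}_{X,Y}$ and using the equivalence above, the outer supremum recovers the closed form of the theorem. The matching lower bound is then obtained by restricting the type-sum to a single near-optimal type and absorbing polynomial prefactors, in the standard way.

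The main technical obstacle is the uniform rank characterization used above: one must control, uniformly in $\hat{P}_{X,Y}$, the cumulative count of sequences preceding $\mathbf{x}$ in the optimal list and show that it is exponentially dominated by the single largest conditional type class whose penalty does not exceed that of $\hat{P}_{X|Y}$. A secondary, more subtle point is verifying that the saddle point of the $m$-fold optimization lies in the regime $\alpha\geq\alpha_{\mathrm{typ}}(\hat{P}_X)$ in which the binding constraints $\alpha(\hat{P}^{(i)})\geq\alpha$ force every $D^{(i)}>0$, so that the factor $m$ is the genuinely correct constant rather than a single ``lucky agent'' divergence; this follows because at $\alpha=\alpha_{\mathrm{typ}}$ one has $D(\hat{P}^{\star}_{Y|X}\|P_{Y|X}|\hat{P}_X)=0$ with vanishing derivative, while the reward term $\rho\alpha$ has strictly positive slope.
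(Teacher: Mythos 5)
Your proposal is correct and arrives at the theorem by a valid route, but it is organized differently from the paper's proof. The paper conditions on $\mathbf{X}=\mathbf{x}$, writes the inner expectation as the tail sum $\sum_{i}\Pr\{\min_j G^*(\mathbf{X}|\mathbf{Y}_{(j)})\geq i\mid\mathbf{X}=\mathbf{x}\}$, uses conditional independence to turn each summand into $[\Pr\{G^*(\mathbf{X}|\mathbf{Y}_{(1)})\geq i\mid\mathbf{x}\}]^m$, and then evaluates the single-agent tail probability by the method of types via the rank characterization of Lemma~\ref{lem:types_si}; the factor $m$ appears mechanically as the $m$-th power of a tail probability whose exponent is the divergence of the single dominant conditional type. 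You instead condition on the full $(m+1)$-tuple of types, evaluate $\min_i G^*$ directly on each joint type class as $\exp\{n\min_i\alpha(\hat{P}^{(i)})\}$, and extract the factor $m$ from the observation that the separable constrained minimization of $\sum_i D(\hat{P}^{(i)}_{Y|X}\|P_{Y|X}|\hat{P}_X)$ subject to $\alpha(\hat{P}^{(i)})\geq\alpha$ for all $i$ collapses to $m$ copies of a common minimizer. The two computations are exponentially equivalent (a product of suprema over separable constraints equals the supremum of the product), so nothing is lost; the paper's tail-sum route avoids the explicit decoupling step, while yours makes transparent \emph{why} all $m$ agents must simultaneously pay the divergence for the minimum to be atypically large, which is precisely the point you raise at the end about ruling out a single ``lucky agent'' exponent (and which only matters for evaluating, not for establishing, the stated formula). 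Both arguments rest on the same key technical ingredient, the uniform rank characterization $\tfrac{1}{n}\log G^*(\mathbf{x}|\mathbf{y})\approx\alpha(\hat{P}_{\mathbf{x},\mathbf{y}})$, which you correctly flag as the main obstacle and which the paper isolates as Lemma~\ref{lem:types_si}.
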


\begin{proof}[Proof of Theorem~\ref{thm:decentralized}]
	We consider the case of $\rho = 1$. The generalization for any $\rho \geq 0$ is immediate. We start by conditioning on $\mathbf{X}$,
	\begin{align}
		\mathbb{E}\left[\min_{j = 1, \ldots, m} \hspace{-.75em} G^*\left(\mathbf{X}|\mathbf{Y}_{(j)} \right)\right] = \mathbb{E} \left[\mathbb{E}\left[\min_{j = 1, \ldots, m} \hspace{-.75em}G^*(\mathbf{X}|\mathbf{Y}_{(j)} ) \middle| \mathbf{X} \right]\right] . \label{eq:conditioning}
	\end{align}
	Since $\min_{j = 1, \ldots, m} G^*(\mathbf{X}|\mathbf{Y})$ is non-negative, \rev{and recalling that $\mathbb{E}[X] = \sum_{i \geq 0}\mathbb{P}(X \geq i)$ for a non-negative random variable $X$}, we have that the inner expectation on the right hand side evaluates to
	\begin{align}
 \sum_{i = 1}^{|\mathcal{X}|^n} \mathbb{P} \left\{  \min_{j = 1,\ldots, m} G^*\left(\mathbf{X} | \mathbf{Y}_{(j)} \right) \geq i \middle| \mathbf{X} = \mathbf{x}  \right\} . \label{eq:sum_i}
	\end{align}
	For a fixed $i$ and $\mathbf{x} \in \mathcal{X}^n$, note that $\mathbf{Y}_{(j)}$ are independent given $\mathbf{X}$, and thus $G^*(\mathbf{X}|\mathbf{Y}_{(j)})$ are independent and identically distributed given $\mathbf{X}$. We then have
	\begin{align}
	& \mathbb{P} \left\{  \min_{j = 1,\ldots, m} G^*\left(\mathbf{X} | \mathbf{Y}_{(j)} \right) \geq i \middle| \mathbf{X} = \mathbf{x}  \right\} \\
	& \hspace{2em}= \prod_{j= 1}^{m} 	\mathbb{P} \left\{  G^*\left(\mathbf{X} | \mathbf{Y}_{(j)} \right) \geq i \middle| \mathbf{X} = \mathbf{x}  \right\} \label{eq:ind_min}\\
	& \hspace{2em} = 	\left[ \mathbb{P} \left\{ G^*\left(\mathbf{X} | \mathbf{Y}_{(1)} \right) \geq i \middle| \mathbf{X} = \mathbf{x}  \right\} \right]^m ,
	\end{align}
	\rev{where we have used independence in \eqref{eq:ind_min}.}
	Next, we have,
	\begin{align}
		& \mathbb{P} \left\{ G^*\left(\mathbf{X} | \mathbf{Y}_{(1)} \right) \geq i \middle| \mathbf{X} = \mathbf{x}  \right\}  \\ 
		& \hspace{.25em}= \sum_{\mathbf{y} : G^*(\mathbf{x}|\mathbf{y}) \geq i } P_{\mathbf{Y}|\mathbf{X}}(\mathbf{y} | \mathbf{x} ) \\
		& \hspace{.25em} = \sum_{\mathbf{y} \in \mathcal{L}_i(\mathbf{x})} \hspace{-.75em} \exp \ppp{  -n \left[D\left(\hat{P}_{\mathbf{y}|\mathbf{x}} \| P_{Y|X} \middle| \hat{P}_{\mathbf{x}} \right) + H\left(\hat{P}_{\mathbf{y}|\mathbf{x}} \middle| \hat{P}_{\mathbf{x}} \right) \right] }, \label{eq:types_1}
	\end{align}
	where $\mathcal{L}_i(\mathbf{x})$ corresponds to the set $\mathcal{L}_i(\mathbf{x}) \triangleq \left\{\mathbf{y} \in \mathcal{Y}^n : G(\mathbf{x}|\mathbf{y}) \geq i  \right\}$, and $\hat{P}_{\mathbf{x}}$ and $\hat{P}_{\mathbf{y}|\mathbf{x}}$ correspond to the empirical distribution (type) of $\mathbf{x}$, and $\mathbf{y}$ given $\mathbf{x}$, respectively (see \cite[Lemma~2.6]{CsisKro}). A given sequence $\mathbf{y}$ with conditional type $\hat{P}_{\mathbf{x}|\mathbf{y}}$ induces a \emph{reverse channel} $\hat{P}_{\mathbf{x}|\mathbf{y}} = \frac{\hat{P}_{\mathbf{x}|\mathbf{y}} \hat{P}_{\mathbf{x}}}{\hat{P}_\mathbf{y}}$. The condition $\mathbf{y} \in \mathcal{L}_i(\mathbf{x})$ can then be expressed in terms of this reverse channel, as the position of $\mathbf{x}$ in the optimal list constructed according to $P_{X|Y}$ is essentially a function of the types $\hat{P}_\mathbf{y}$ and $\hat{P}_{\mathbf{x}|\mathbf{y}}$, and the value of $\alpha \triangleq \log i$, as shown in Lemma~\ref{lem:types_si} in the Appendix. Thus, using the method of types \cite[Chapter~2]{CsisKro}, we may rewrite \eqref{eq:types_1} as follows
	\begin{align}
		&\mathbb{P}  \left\{ G^*\left(\mathbf{X} | \mathbf{Y}_{(1)} \right) \geq i \middle| \mathbf{X} = \mathbf{x}  \right\}  \\
		 & \hspace{.5em}= \hspace{-2em}\sum_{\hat{P}_{\mathbf{x},\mathbf{y}} \notin \mathcal{Q}\left(\alpha,\hat{P}_Y \right)} \hspace{-2em} \left| T(\hat{P}_{\mathbf{y}|\mathbf{x}})\right| \exp \left\{  -n \left[D\left(\hat{P}_{\mathbf{y}|\mathbf{x}} \| P_{Y|X} \middle| \hat{P}_{\mathbf{x}} \right) \right. \right. \nonumber \\
		 & \hspace{12em} \left.  \left. + H\left(\hat{P}_{\mathbf{y}|\mathbf{x}} \middle| \hat{P}_{\mathbf{x}} \right) \right] \right\}  \\
		&\hspace{.5em} \de  \hspace{-5em}\sum_{\subalign{\hspace{5.5em} & \hat{P}_{X,Y} \\ & \text{subject to} \; \hat{P}_{X|Y} \notin \mathcal{Q}\left(\alpha,\hat{P}_Y \right)}} \hspace{-5em}  \exp \left\{  -n \left(D\left(\hat{P}_{\mathbf{y}|\mathbf{x}} \| P_{Y|X} \middle| \hat{P}_{\mathbf{x}} \right)\right) \right\} \\
		&\hspace{.5em} \de \hspace{-5em}\sup_{\subalign{\hspace{5.5em} & \hat{P}_{X,Y} \\ & \text{subject to} \; \hat{P}_{X|Y} \notin \mathcal{Q}\left(\alpha,\hat{P}_Y \right)}} \hspace{-5em}   \exp \left\{  -n \left(D\left(\hat{P}_{\mathbf{y}|\mathbf{x}} \| P_{Y|X} \middle| \hat{P}_{\mathbf{x}} \right)\right) \right\} . \label{eq:types_2}
	\end{align}
	We are now ready to plug \eqref{eq:types_2} into \eqref{eq:sum_i}. Recall that the position of $\mathbf{x}$ is a function of the types $\hat{P}_{\mathbf{x}|\mathbf{y}}$ and $\hat{P}_\mathbf{y}$. Let the set $\mathcal{A} = \{\alpha :  \alpha = H(\hat{P}_{\mathbf{x}}), \text{for some sequence } \mathbf{x} \in \mathcal{X}^n\}$, be the set of empirical entropy values which can be obtained from the $n$-length sequences. Note that since there are only a polynomial number, in $n$, of valid types $\hat{P}_{\mathbf{x}}$, $\mathcal{A}$ is also of polynomial size, and thus, we have
	\begin{align}
	&\sum_{i = 1}^{|\mathcal{X}|^n} \mathbb{P}  \left\{  \min_{j = 1,\ldots, m} G^*\left(\mathbf{X} | \mathbf{Y}_{(j)} \right) \geq i \middle| \mathbf{X} = \mathbf{x}  \right\}  \nonumber \\
	& \hspace{.5em}= \sum_{\alpha \in \mathcal{A}}e^{n\alpha} \mathbb{P} \left\{  \min_{j = 1,\ldots, m} G^*\left(\mathbf{X} | \mathbf{Y}_{(j)} \right) \geq \lceil |\mathcal{X}|^{n\alpha} \rceil \middle| \mathbf{X} = \mathbf{x}  \right\}  \nonumber \\
	& \hspace{.5em}\de \sup_{\alpha \in [0, 1]}  \hspace{-5em}\sup_{\subalign{\hspace{5.5em} & \hat{P}_{X,Y} \\ & \text{subject to} \; \hat{P}_{X|Y} \notin \mathcal{Q}\left(\alpha,\hat{P}_Y \right)}} \hspace{-5em} \exp \left\{  n \left[ \alpha - D\left(\hat{P}_{\mathbf{y}|\mathbf{x}} \| P_{Y|X} \middle| \hat{P}_{\mathbf{x}} \right)\right] \right\} \label{eq:types_3} .
	\end{align}
	Finally, plugging \eqref{eq:types_3} into \eqref{eq:conditioning}, and  using once again the method of types to get that $\mathbb{P}(\mathbf{X} \in T(\hat{P}_{\mathbf{x}})) \de \exp\{-nD(\hat{P}_{\mathbf{x}} \| P_X )\}$, the result is deduced.
\end{proof}

Using Theorem~\ref{thm:decentralized}, we have the following corollary.
\begin{corollary}\label{cor:decentralized}
	For any $\rho > 0$,
	\begin{align}
		\lim_{m\to\infty} 	\lim_{n\to\infty} \frac{1}{n} \log \mathbb{E}\left[\min_{i = 1, \ldots,m} \left\{ G(\mathbf{X}|\mathbf{Y}_{(i)})^\rho \right\} \right] = H(X|Y) .
	\end{align}
\end{corollary}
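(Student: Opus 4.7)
The plan is to apply Theorem~\ref{thm:decentralized} and send $m\to\infty$ inside the single-letter variational formula~\eqref{eq:decentralized_closedform}. The central observation is that the penalty $m\,D(\hat{P}_{Y|X}\|P_{Y|X}|\hat{P}_X)$ is linear in $m$ and nonnegative, so any joint empirical distribution with $D(\hat{P}_{Y|X}\|P_{Y|X}|\hat{P}_X)>0$ drives the objective to $-\infty$. The supremum therefore collapses onto joint distributions of the form $\hat{P}_{XY}(x,y)=\hat{P}_X(x)\,P_{Y|X}(y|x)$, under which $\hat{P}_Y$ and $\hat{P}_{X|Y}$ are determined from $\hat{P}_X$ by Bayes' rule, and the limiting exponent reduces to
\[
\sup_{\alpha,\,\hat{P}_X}\bigl\{\rho\alpha-D(\hat{P}_X\|P_X)\bigr\}\quad\text{subject to}\quad \hat{P}_{X|Y}\notin\mathcal{Q}\bigl(\alpha,\hat{P}_Y\bigr).
\]

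To lower-bound this, I would evaluate at the natural point $\hat{P}_X=P_X$, so that $D(\hat{P}_X\|P_X)=0$, $\hat{P}_Y=P_Y$, and $\hat{P}_{X|Y}=P_{X|Y}$. Using the identity $D(Q\|P)+H(Q)=-\mathbb{E}_Q[\log P]$, the functional minimized in~\eqref{eq:threshold_prob} is linear in $Q$, and its minimum over the convex set $\{Q:H(Q|P_Y)\ge\alpha\}$ is attained on the tilted family $Q_\beta(x|y)\propto P_{X|Y}(x|y)^{\beta}$, with $\beta\ge 0$ chosen so that $H(Q_\beta|P_Y)=\alpha$. The choice $\beta=1$ gives $Q^*=P_{X|Y}$, $H(Q^*|P_Y)=H(X|Y)$, and $\Phi(H(X|Y),P_Y)=H(X|Y)$. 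Since $\Phi(\cdot,P_Y)$ is nondecreasing in $\alpha$, the quantity $D(\hat{P}_{X|Y}\|P_{X|Y}|\hat{P}_Y)+H(\hat{P}_{X|Y}|\hat{P}_Y)=H(X|Y)$ appearing on the left-hand side of the membership condition in the definition of $\mathcal{Q}$ dominates $\Phi(\alpha,P_Y)$ for every $\alpha\le H(X|Y)$, so $\hat{P}_{X|Y}\notin\mathcal{Q}(\alpha,P_Y)$. Pushing $\alpha$ up to $H(X|Y)$ yields the claim.

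For the matching upper bound, the argument is that no other feasible pair improves upon this. With $\hat{P}_X=P_X$ and $\alpha>H(X|Y)$ the tilted optimizer has $\beta<1$, $\Phi(\alpha,P_Y)>H(X|Y)$, and feasibility fails. Perturbing $\hat{P}_X$ away from $P_X$ can relax the constraint and admit a larger $\alpha$, but incurs the strictly positive cost $D(\hat{P}_X\|P_X)$. A first-order variational (Lagrangian) argument at $\hat{P}_X=P_X$, exploiting the convexity of $\hat{P}_X\mapsto D(\hat{P}_X\|P_X)$ together with the smooth dependence of the Lagrangian characterization of $\Phi$ on $\hat{P}_Y$, should show that the marginal gain in admissible $\alpha$ is dominated by the divergence penalty. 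Verifying this tight optimality check at $\hat{P}_X=P_X$ is the main technical obstacle; the reduction from Theorem~\ref{thm:decentralized} and the lower bound at $(P_X,H(X|Y))$ are otherwise routine.
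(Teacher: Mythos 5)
Your proposal follows essentially the same route as the paper: send $m\to\infty$ in \eqref{eq:decentralized_closedform}, observe that the linear-in-$m$ divergence penalty forces $\hat{P}_{Y|X}=P_{Y|X}$ on the support of $\hat{P}_X$, and then argue that the remaining supremum over $(\alpha,\hat{P}_X)$ is attained at $(H(X|Y),P_X)$. Your lower bound --- using the cross-entropy identity $D(Q\|P)+H(Q)=-\mathbb{E}_Q[\log P]$ and the tilted family to check that $P_{X|Y}\notin\mathcal{Q}(\alpha,P_Y)$ precisely when $\alpha\le H(X|Y)$ --- is correct and in fact more explicit than the paper's one-line deduction. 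The step you leave open, namely that moving $\hat{P}_X$ away from $P_X$ cannot enlarge the admissible $\alpha$ by more than the cost $D(\hat{P}_X\|P_X)/\rho$, is a genuine gap in your write-up: you correctly identify it as the crux of the upper bound but do not carry out the variational argument. It is worth noting, however, that the paper's own proof does not close this step either; it simply asserts that ``the maximizing $\hat{P}_X$ is then given by $\hat{P}_X=P_X$.'' So you have isolated the only nontrivial point, and to make the argument complete you would need to execute the Lagrangian computation you sketch, or supply a separate converse (e.g.\ a direct probabilistic lower bound on the minimum of the $m$ guessworks). One bookkeeping remark: your reduction yields the value $\rho H(X|Y)$, which is consistent with Theorem~\ref{thm:distrib_BSC} (where $\rho H_{m/(\rho+m)}(\delta)\to\rho H(\delta)$), whereas the corollary as printed omits the factor $\rho$; your derivation, like the paper's proof, actually supports the former.
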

\begin{proof}
	Looking at \eqref{eq:decentralized_closedform}, we see that as $m \to \infty$, $D(\hat{P}_{Y|X} \| P_{Y|X} | \hat{P}_X)$ must be zero, and thus $\hat{P}_{Y|X}$ must be equal to $P_{Y|X}$ for all $x$ such $\hat{P}_X(x)> 0$. Note that, the maximizing $\hat{P}_X$ is then given by $\hat{P}_X = P_X$, and thus we get $\hat{P}_{X,Y} = P_{X,Y}$. This in turns impose a condition on $\alpha$, namely that the set $\mathcal{Q}(\alpha,P_Y)$ must not contain $P_{X|Y}$. Precisely, we have
	\begin{align}
		P_{X|Y} \notin \mathcal{Q}(\alpha|P_Y) \implies H(P_{X|Y}|P_Y) \geq H(Q^*_{X|Y}|P_Y) \geq \alpha,
	\end{align}
	where the second inequality follows from the definition of $Q^*_{X|Y}$. Thus, the maximal $\alpha$ is given by $H(P_{X|Y}|P_Y) = H(X|Y)$.
\end{proof}

To illustrate the power of the decentralized approach, we consider again the BEC and BSC side information. Note that it is possible to obtain these results by plugging in Theorem~\ref{thm:decentralized}. However, for these two channels, it is insightful to take a direct approach. In addition, we address Remark~\ref{remark:oblivious}, and show that under these two channels, the number of guesses does not change asymptotically even if the adversaries coordinate jointly their lists prior to observing the side-information.

\begin{theorem}\label{thm:distrib_BEC}
For $\mathrm{BEC}(\epsilon)$,
	\begin{align}
	E_\rho^{(d)}(\mathrm{BEC}(\epsilon),m)= \sup_{\lambda \in [0,1]} \left( \rho \lambda - m D(\lambda || \epsilon) \right).
	\end{align}
\end{theorem}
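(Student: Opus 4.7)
The plan is to exploit the simple structure of the posterior for a BEC: for uniform binary $\mathbf{X}$, given $\mathbf{Y}_{(i)}$ with erasure pattern $E_i$ of cardinality $K_i$, the posterior $P_{\mathbf{X}|\mathbf{Y}_{(i)}}$ is uniform over the $2^{K_i}$ sequences compatible with $\mathbf{Y}_{(i)}$, and hence $G^*(\mathbf{X}|\mathbf{Y}_{(i)}) \in \{1,\ldots,2^{K_i}\}$. In particular, $\min_{i} G^*(\mathbf{X}|\mathbf{Y}_{(i)}) \leq 2^{K_{\min}}$ where $K_{\min} \triangleq \min_{i\in[1:m]} K_i$, giving the upper bound $\mathbb{E}[\min_i G^*(\mathbf{X}|\mathbf{Y}_{(i)})^\rho] \leq \mathbb{E}[2^{\rho K_{\min}}]$. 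For the matching lower bound, I would condition on the collection of erasure patterns: since $\mathbf{X}$ is uniform, the marginal law of $G^*(\mathbf{X}|\mathbf{Y}_{(i)})$ given $E_i$ is uniform on $\{1,\ldots,2^{K_i}\}$, so a union bound yields $\mathbb{P}(\min_i G^*(\mathbf{X}|\mathbf{Y}_{(i)}) \leq j \mid \{E_i\}_{i=1}^{m}) \leq m j / 2^{K_{\min}}$. Choosing $j$ of order $2^{K_{\min}}/(2m)$ makes the complementary probability at least $1/2$, and integrating $\mathbb{E}[X^\rho] = \int_0^\infty \rho x^{\rho-1}\mathbb{P}(X>x)\,dx$ gives $\mathbb{E}[\min_i G^*(\mathbf{X}|\mathbf{Y}_{(i)})^\rho \mid \{E_i\}_{i=1}^{m}] \geq c_{m,\rho}\, 2^{\rho K_{\min}}$ for an explicit constant $c_{m,\rho}>0$.

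Having pinched $\mathbb{E}[\min_i G^*(\mathbf{X}|\mathbf{Y}_{(i)})^\rho]$ between constant multiples of $\mathbb{E}[2^{\rho K_{\min}}]$, it remains to evaluate the latter to exponential order. The variables $K_1,\ldots,K_m$ are i.i.d. $\mathrm{Binomial}(n,\epsilon)$, so the tail of the minimum factors as $\mathbb{P}(K_{\min}\geq n\lambda) = \mathbb{P}(K_1 \geq n\lambda)^m \doteq \exp(-m n D(\lambda\|\epsilon))$ for $\lambda > \epsilon$, by Cram\'er's theorem. Applying Laplace's principle,
\begin{equation}
\mathbb{E}[2^{\rho K_{\min}}] \doteq \sup_{\lambda\in[0,1]} 2^{n\rho\lambda}\, \mathbb{P}(K_{\min}\geq n\lambda),
\end{equation}
and taking $n^{-1}\log(\cdot)$ produces the Legendre transform $\sup_{\lambda\in[0,1]}\{\rho\lambda - m D(\lambda\|\epsilon)\}$ claimed in the theorem.

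The main technical difficulty is that the positions $G^*(\mathbf{X}|\mathbf{Y}_{(i)})$ are \emph{not} mutually independent across $i$---they are coupled through the common input $\mathbf{X}$---so the expectation cannot be factored naively. Conditioning on the full collection $\{E_i\}_{i=1}^m$ of erasure patterns and invoking the uniformity of $\mathbf{X}$ restricted to $\bigcup_i E_i$ sidesteps this dependence. As a bonus, this same argument addresses Remark~\ref{remark:oblivious}: even if the $m$ agents were allowed to coordinate their tie-breaking rules in advance, the resulting exponent could not improve, since the posterior list for the agent realizing $K_{\min}$ has size exactly $2^{K_{\min}}$, so the crude bound $\min_i G^*(\mathbf{X}|\mathbf{Y}_{(i)}) \leq 2^{K_{\min}}$ is already tight up to sub-exponential factors, irrespective of how the lists are permuted.
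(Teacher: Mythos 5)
Your proposal is correct, and while the upper bound coincides with the paper's (both reduce to the agent with the fewest erasures, whose posterior list has size $2^{K_{\min}}$, and then to a large-deviations computation for the minimum of $m$ i.i.d.\ binomials yielding the exponent $mD(\lambda\|\epsilon)$ for $\lambda>\epsilon$), your lower bound takes a genuinely different and noticeably simpler route. The paper handles the dependence of the $G^*(\mathbf{X}|\mathbf{Y}_{(i)})$ through the common erasures by introducing an Oracle that reveals the erasure sets and equalizes the agents, then argues that the optimal joint strategy queries disjoint subsequences on the common erasures (lexicographic vs.\ anti-lexicographic orderings) and invokes the soft-elimination lemma (Lemma~\ref{lem:soft_elim}) to show this shaping costs at most a ``free query'' per step. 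You instead condition on the erasure patterns, observe that $\mathbf{X}$ is then uniform on a set of size $2^{K_j}$ compatible with $\mathbf{Y}_{(j)}$, and apply a union bound $\mathbb{P}(\min_j G_j \leq i \mid \{E_j\}) \leq m\,i\,2^{-K_{\min}}$, which pins the $\rho$-th moment within a constant factor of $2^{\rho K_{\min}}$. This sidesteps the dependence issue entirely, requires no auxiliary lemma, and---because the union bound only uses that each $G_j(\cdot|\mathbf{y})$ is one-to-one, not that it is the posterior-optimal list---it immediately yields the stronger statement of Remark~\ref{remark:oblivious} (agents who pre-coordinate their lists cannot improve the exponent), which is exactly what the paper's more elaborate Oracle construction was designed to establish. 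The one thing the paper's argument buys that yours does not is an explicit description of what the optimal pre-coordinated strategy looks like; for the exponent itself your argument is tighter and cleaner. Minor points to polish: state explicitly that the compatible sequences occupy the first $2^{K_j}$ positions of the optimal list (so the upper bound $G^*(\mathbf{X}|\mathbf{Y}_{(j)})\leq 2^{K_j}$ holds), handle the degenerate regime $2^{K_{\min}} < 2m$ in the constant $c_{m,\rho}$ (trivial since the moment is always at least $1$), and fix a consistent logarithm base so that the $2^{\rho K_{\min}}$ and the divergence $D(\lambda\|\epsilon)$ in the Legendre transform are measured in the same units.
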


Before we proceed to the proof of Theorem~\ref{thm:distrib_BEC}, some remarks are in order. One can verify that the guesswork exponent for the decentralized mechanism, as the number of agents $m$ increases, converges towards $\epsilon$ (see Fig~\ref{fig:1}), as expected from Corollary~\ref{cor:decentralized}. On the other hand, Remark~1 implies that even two agents that collapse their side information are more powerful than any finite number of agents guessing $\mathbf{X}$ in a decentralized way, since the centralized scheme has a convex exponent.

\begin{proof}[Proof of Theorem~\ref{thm:distrib_BEC}]
	For simplicity of exposition, we focus on the case where $m = 2$ and $\rho = 1$, while the generalization for any $\rho$ and $m$ is immediate. The proof of Theorem~\ref{thm:distrib_BEC} follows from two steps. 
	First, we find an upper bound on the guesswork exponent by considering the exponent of the shortest sequence \footnote{Note that this exponent can be derived directly as a consequence of the results in \cite{christiansen2013guessing}. The proof method in this paper is included for completeness, and characterizes only the exponent of the guesswork, as opposed to the entire large deviation rate, as done in \cite{christiansen2013guessing}.}. Recall that, since $\mathbf{Y}_{(i)}$ is just an erased version of $\mathbf{X}$ for any $i=1,\ldots,m$. The adversaries must each try to guess a sequence $\mathbf{Z}_{(i)}$, where the length of $\mathbf{Z}_{(i)}$ is the number of erasures in the sequence $\mathbf{Y}_{(i)}$, denoted $\mathcal{E}_n^{(i)}$, and $\mathbf{Z}_{(i)}$ is a uniformly distributed binary sequence. We then have
	\begin{align}
	\mathbb{E}[\min_{i = 1,\ldots,m} G(\mathbf{Z}_{(i)})] \leq \mathbb{E}[G(\mathbf{Z}_{*})],
	\end{align}
	where $\mathbf{Z}_*$ is the sequence of any adversary which has $\mathcal{E}^*_n \triangleq \min_{i = 1} \mathcal{E}_n^{(i)}$ erasures. Note that the probability of having $\mathcal{E}^*_n = n \cdot \lambda$ for some $\epsilon \leq \lambda \leq 1$, is given (exponentially) by $\exp\left[-n \cdot mD(\lambda \| \epsilon)\right]$. Indeed, 
	\begin{align}
	\mathbb{P}\left(\frac{1}{n}\mathcal{E}_n^* =\lambda \right) & \doteq \mathbb{P}\left(\frac{1}{n}\mathcal{E}_n^* \leq\lambda\right) \\
	& = \mathbb{P}\left(\frac{1}{n}\mathcal{E}_i \leq\lambda\right)^m \\
	& \doteq \exp \left[-n \cdot m D(\lambda \| \epsilon) \right],
    \end{align}
    where the last step follows from Sanov's theorem. Similarly, when $0 \leq \lambda < \epsilon$, the probability $ \mathbb{P}\left(\frac{1}{n}\mathcal{E}_n^* =\lambda \right) $ is exponentially equal to $\exp - n D(\lambda \| \epsilon)$. Therefore, letting $f(\lambda,m) = \mathbf{1}\left\{ \lambda > \epsilon \right\} m D(\lambda \| \epsilon) + \mathbf{1}\left\{ \lambda \leq \epsilon \right\} D(\lambda \| \epsilon)$, we have:
	\begin{align}
		\mathbb{E}\left[G(\mathbf{Z}_*)\right] &= \mathbb{E}\left[\mathbb{E}\left[G(\mathbf{Z}_*) | \mathcal{E}_* = n \lambda \right]\right] \\
		& = \sum_{\lambda = 0, 1/n, \ldots, 1} \mathbb{P}(\mathcal{E}_* = \lambda n) \exp(n \lambda) \\
		& \de \exp \left[ n \sup_{\lambda \in [0,1]} \left( \lambda - f(\lambda, \epsilon) \right) \right].
	\end{align}
	Noting that the maximizing $\lambda$ is always greater or equal to $\epsilon$, we have the upper-bound
	\begin{align}
			\lim_{n \to \infty} \frac{1}{n} \log \mathbb{E}\left[ \min_{i = 1,\ldots,m} G(\mathbf{Z}_{(i)}) \right] \leq \sup_{\lambda \in [0,1]} \left[ \lambda - mD(\lambda \| \epsilon) \right].
	\end{align}
	To obtain a matching lower-bound, we consider an oracle that provides additional information to both agents, strictly reducing their guesswork. The additional information from the oracle allows to construct explicitly the optimal list of both agents. More precisely, this is achieved by transmitting the position of the common erasures for both agent. The optimal joint strategy is then to construct lists as to minimize queries that have a common subsequence in the overlapping erasures. Indeed, each incorrect query from an agent, shapes the probability distribution of the second agent because of the common sequences. We show that this probability shaping, can be again lower-bounded by a mechanism in which each agent has two guesses at each step, instead of one, therefore not affecting the guesswork exponent. This is formalized below:
	\begin{figure}
	    \centering
	    \includegraphics[scale = .4]{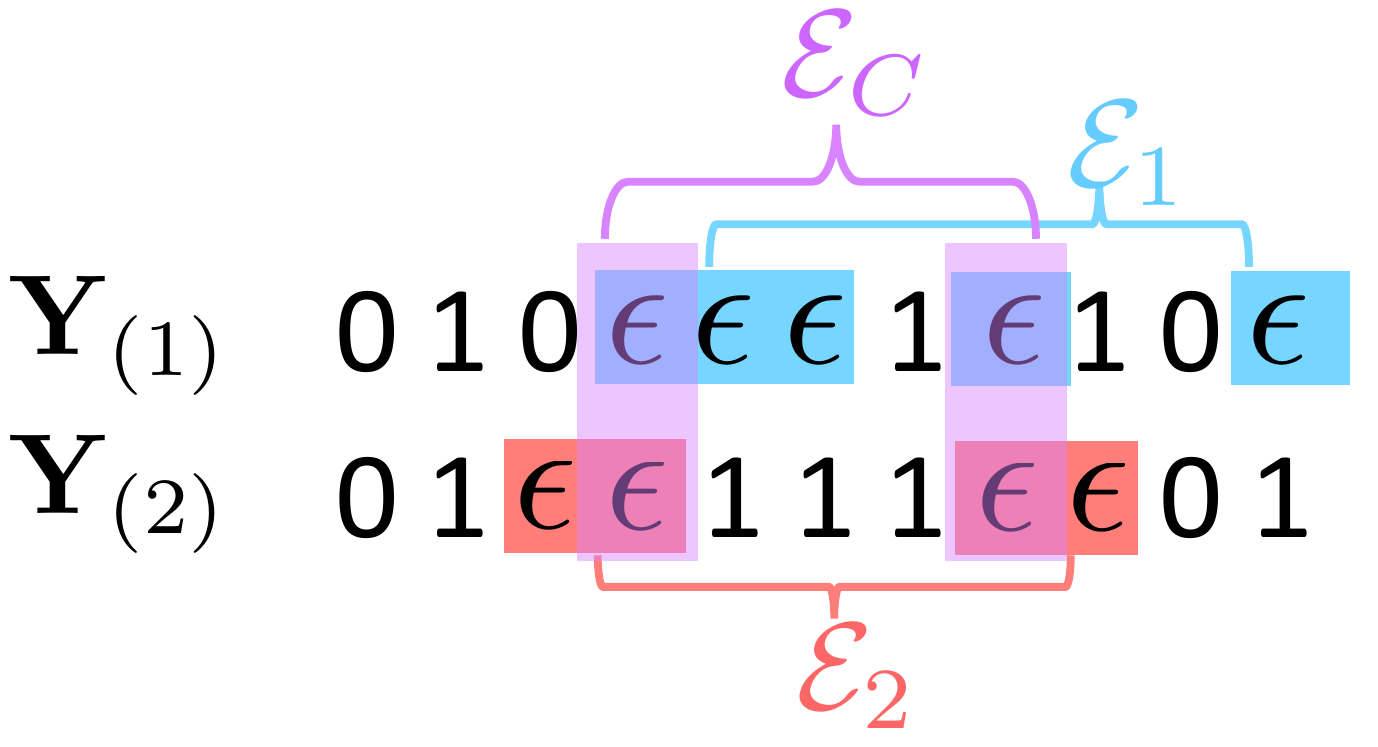}
	    \caption{The erasures sets that the Oracle Mechanism shares. Note that $G^*(\mathbf{X}|\mathbf{Y}_{(1)})$ and $G^*(\mathbf{X}|\mathbf{Y}_{(2)})$ are not independent because of the bits in $\mathcal{E}_C$. Over this interval, the agents should query sequences which are disjoint, by for example, querying following opposite ends of a lexicographical ordering.}
	    \label{fig:oracle}
	\end{figure}
	\begin{definition}[Oracle Mechanism] 
		Let $\mathcal{E}_1$ be the set of erased indices for agent 1, that is $\mathcal{E}_1 = \{i | Y_{(1),i} = \epsilon \}$, and define $\mathcal{E}_2$ similarly for agent 2. Also let $\mathcal{E}_C = \mathcal{E}_1 \cap \mathcal{E}_2$ be the common erasures, and denote by $n_c = |\mathcal{E}_C|$. Further, let $n_1 = |\mathcal{E}_1 \backslash \mathcal{E}_C|$ and $n_2 = |\mathcal{E}_2 \backslash \mathcal{E}_C|$, see Fig.~\ref{fig:oracle}. Suppose without loss of generality that $n_1 \geq n_2$ .We consider an helping oracle that does the following:
		\begin{itemize}
			\item Transmits to each agent the sets $\mathcal{E}_1$ and $\mathcal{E}_2$. 
			\item Reveals $n_1 - n_2$ bits among those in $\mathcal{E}_1 \backslash \mathcal{E}_C$ to agent 1, making agent 1 as strong as agent 2.
		\end{itemize}
		That is, agent $i$ has to guess a binary uniform sequence $(\tilde{X}_{(i)}^{n_1}, \tilde{X}^{n_c})$, where the subsequence $\tilde{X}^{n_c}$ is common for both agents, and the subsequences $\tilde{X}^{n_1}_{(1)}$ and $\tilde{X}^{n_1}_{(2)}$ are independent.
	\end{definition}
	With the knowledge of the Oracle, the two agents will try to construct an optimal joint strategy. At step $k$, the agent $1$ will pick its sequence assuming its previous $k-1$ were incorrect, as well as the $k-1$ sequences of the second agent. Indeed, each of the $k-1$ guesses of the second agent shapes the probability distribution over the sequences for the first agent due to the common sequence $\tilde{X}^{n_c}$. Therefore, the optimal strategy for the agent $1$ is to query a sequence for which the corresponding subsequence $\tilde{x}^{n_c}$ is as likely as possible, or in other words, has been queried the least so far by the other agent. This can be achieved simply by considering a lexicographical ordering over the subsequences $x^{n_c}$ for one agent and an anti-lexicographical ordering for the other agent, as this guarantees that each agent queries sequences that disagree on their subsequence. Next, using the Lemma~\ref{lem:soft_elim}, we show that this process is worse, in terms of guesswork, to a process in which the agent gets one \emph{free} query. Therefore, the guesswork is unchanged asymptotically, and we obtain the desired result.
\end{proof}

We now study the BSC side-information channel.

\begin{theorem}\label{thm:distrib_BSC}
	For $\mathrm{BSC}(\delta)$,
	\begin{align}
	E_\rho^{(d)}(\mathrm{BSC}(\delta), m) = \rho H_{\frac{m}{\rho+m}}(\delta).
	\end{align}
\end{theorem}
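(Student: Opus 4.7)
\textbf{Proof sketch for Theorem~\ref{thm:distrib_BSC}.} The plan is to exploit the Hamming-distance structure of the BSC to reduce the guesswork computation to a one-dimensional large-deviations problem for the minimum of $m$ i.i.d.\ Binomial weights. Since under $\mathrm{BSC}(\delta)$ with uniform $\mathbf{X}$ the posterior $P_{\mathbf{X}|\mathbf{Y}}(\cdot|\mathbf{y})$ depends on $\mathbf{x}$ only through $d_H(\mathbf{x},\mathbf{y})$, the optimal list $G^*(\cdot|\mathbf{y})$ orders sequences by increasing Hamming distance to $\mathbf{y}$. Writing $W_i \triangleq d_H(\mathbf{X},\mathbf{Y}_{(i)})$, on the event $\{W_i\leq n/2\}$ one has the two-sided bound
\begin{equation*}
\binom{n}{W_i} \;\leq\; G^*(\mathbf{X}|\mathbf{Y}_{(i)}) \;\leq\; \sum_{k=0}^{W_i}\binom{n}{k},
\end{equation*}
whose endpoints are exponentially equivalent to $\exp\{nH(W_i/n)\}$. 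By symmetry of the BSC, conditionally on $\mathbf{X}$ the $W_i$ are i.i.d.\ $\mathrm{Bin}(n,\delta)$, and for $\delta<1/2$ the minimum $W_*\triangleq\min_i W_i$ lies in $[0,n/2]$ with high probability, so the two-sided bound is uniformly useful. This reduces the theorem to establishing $\mathbb{E}[\min_i G^*(\mathbf{X}|\mathbf{Y}_{(i)})^\rho]\doteq \mathbb{E}[\exp\{n\rho H(W_*/n)\}]$.

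Next, I would derive the large-deviation rate function of $W_*/n$ directly from order-statistic identities. From $\mathbb{P}(W_*/n\geq\lambda)=\mathbb{P}(W_1/n\geq\lambda)^m$ and $\mathbb{P}(W_*/n\leq\lambda)\doteq m\,\mathbb{P}(W_1/n\leq\lambda)$, one finds that $W_*/n$ satisfies an \emph{asymmetric} LDP with rate
\begin{equation*}
I_*(\lambda)=\begin{cases} D(\lambda\|\delta), & \lambda\leq\delta,\\ m\,D(\lambda\|\delta), & \lambda>\delta,\end{cases}
\end{equation*}
reflecting the fact that pulling the minimum below its mean requires only one atypically small $W_i$, whereas forcing it above the mean requires \emph{all} $m$ to be atypically large. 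Varadhan's lemma (or a direct Laplace estimate over the $O(n)$ admissible values of $W_*/n$) then yields
\begin{equation*}
E_\rho^{(d)}(\mathrm{BSC}(\delta),m)\;=\;\sup_{\lambda\in[0,1/2]}\bigl[\rho H(\lambda)-I_*(\lambda)\bigr].
\end{equation*}

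The closed-form evaluation proceeds by splitting the supremum along the two branches of $I_*$. On $\lambda\leq\delta$, the concave function $\rho H(\lambda)-D(\lambda\|\delta)$ has unconstrained maximizer $\lambda^*=\delta^{1/(1+\rho)}/(\delta^{1/(1+\rho)}+(1-\delta)^{1/(1+\rho)})>\delta$ (for $\delta<1/2$), so the constrained maximum occurs at the boundary with value $\rho H(\delta)$. On $\lambda>\delta$, invoking the classical Arikan identity $\sup_\lambda[\gamma H(\lambda)-D(\lambda\|\delta)]=\gamma H_{1/(1+\gamma)}(\delta)$ with $\gamma=\rho/m$ gives
\begin{equation*}
\sup_{\lambda>\delta}\bigl[\rho H(\lambda)-m\,D(\lambda\|\delta)\bigr]\;=\;\rho\,H_{m/(m+\rho)}(\delta),
\end{equation*}
and the corresponding maximizer is again $>\delta$ so the constraint is inactive. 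Since the R\'enyi entropy is decreasing in its order and $m/(m+\rho)<1$, we have $H_{m/(m+\rho)}(\delta)>H(\delta)$, so the second branch dominates and the claimed exponent follows.

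\textbf{Main obstacle.} The main technical step is the asymmetric LDP together with its interaction with the exponential weighting $\exp\{n\rho H(\cdot)\}$: the factor of $m$ appears only on the upper branch of $I_*$, yet it is precisely that branch which carries the optimal $\lambda$, and this is exactly what produces the order $m/(m+\rho)$ in the R\'enyi entropy. A secondary but nontrivial point, addressing Remark~\ref{remark:oblivious}, is the matching lower bound on the pre-coordinated minimum: any admissible joint strategy must agree with the Hamming-distance ordering up to intra-sphere permutations (which only affect sub-exponential factors), so any genie-revealed identity of the ``closest'' $\mathbf{Y}_{(i)}$ still forces at least $e^{nH(W_*/n)}$ guesses in expectation, leaving the exponent $\rho H_{m/(m+\rho)}(\delta)$ unchanged.
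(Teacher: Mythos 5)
Your proposal follows essentially the same route as the paper's proof: reduce $G^*(\mathbf{X}|\mathbf{Y}_{(i)})$ to a function of the Hamming weight of the flip sequence, derive the asymmetric large-deviation rate $\min(1,m\cdot\mathbf{1}\{\lambda>\delta\})\,D(\lambda\|\delta)$ for the minimum of $m$ i.i.d.\ binomial weights, and then evaluate $\sup_\lambda[\rho H(\lambda)-mD(\lambda\|\delta)]$ via the Arikan identity to obtain $\rho H_{m/(m+\rho)}(\delta)$. Your write-up is in fact slightly more careful than the paper's at two points --- you make explicit the two-sided bound $\binom{n}{W}\le G^*\le\sum_{k\le W}\binom{n}{k}\doteq e^{nH(W/n)}$ where the paper loosely writes $G\doteq 2^{S}$, and you carry out the branch-by-branch evaluation of the supremum that the paper only asserts --- but the argument is the same.
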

\begin{proof}[Proof of Theorem~\ref{thm:distrib_BSC}]
	First notice that $\mathbf{Y}_{(i)} = \mathbf{X} \oplus \mathbf{Z}_{(i)}$ where $\mathbf{Z}_{(i)}$ is the sequence of flips, and is generated i.i.d. from $\mathrm{Bern}(\delta)$, and hence $G(\mathbf{X}|\mathbf{Y}_{(i)}) = G(\mathbf{Z}_{(i)})$. Further, all $\mathbf{Z}_{(i)}$ sequences are independent, and so are the guessworks $G(\mathbf{Z}_{(i)})$. First recall the following elementary result. Let $S^n_i$, for $i = 1,\ldots,m$, be the sum of $n$ i.i.d. coin flips with parameter $\delta$, and let $S_1^n, \ldots, S_m^n$ be independent. Then, for any $\delta < s \leq 1$:
	\begin{align*}
	&\mathbb{P} \left(\min_{i} S_i = sn \right) = m \cdot \mathbb{P}(S_1 = s\cdot n) \cdot \prod_{i=2}^m \mathbb{P}(S_i \geq s\cdot n) \\
	 &\quad\quad\quad\quad\quad\quad\de \exp \{- n D(s || \delta)  \} \left( \exp\{ -nD(s||\delta)\} \right)^{m-1} \\
	 &\quad\quad\quad\quad\quad\quad\de \exp \{ -nm D(s||\delta)\}.
	\end{align*}
	Alternatively, when $0 < s \leq \delta$, we have:
	\begin{align}
	\mathbb{P} \left(\min_{i = 1,\ldots,m} S_i = sn\right) & \de \exp\{ -nD(s||\delta)\}.
	\end{align}
	Using the previous results, and recalling that $G(Z^n_{(i)}) \de 2^{S^n_i}$, where $S^n_i$ is the number of 0's in the sequence (the type of the binary sequence), we obtain that:
	\begin{align}
	\mathbb{E} \left[\min_{i = 1,\ldots,m} G(\mathbf{Z}_{(i)})^\rho \right] \de \exp \left\{ n \cdot \sup_{\lambda \in [0,1]} \left(  \rho \lambda - f(\lambda,m) \right) \right\},
	\end{align}
	where $f(\lambda,m) = \mathbf{1}\{ \lambda > \delta \} mD(\lambda ||\delta) + \mathbf{1}\{\lambda \leq \delta \} D(\lambda || \delta)$. The desired result follows by observing that the maximization over $\lambda$ always lead to a solution in the range $\lambda > \delta$, for any $\rho > 0$.
\end{proof}

The results of Theorems \ref{thm:centralized_BSC},\ref{thm:distrib_BEC} and \ref{thm:distrib_BSC} are illustrated in Figures~\ref{fig:1} and \ref{fig:2}.

\begin{figure}
		\centering
	\begin{minipage}{.45\textwidth}
	\centering
	\includegraphics[scale=.52]{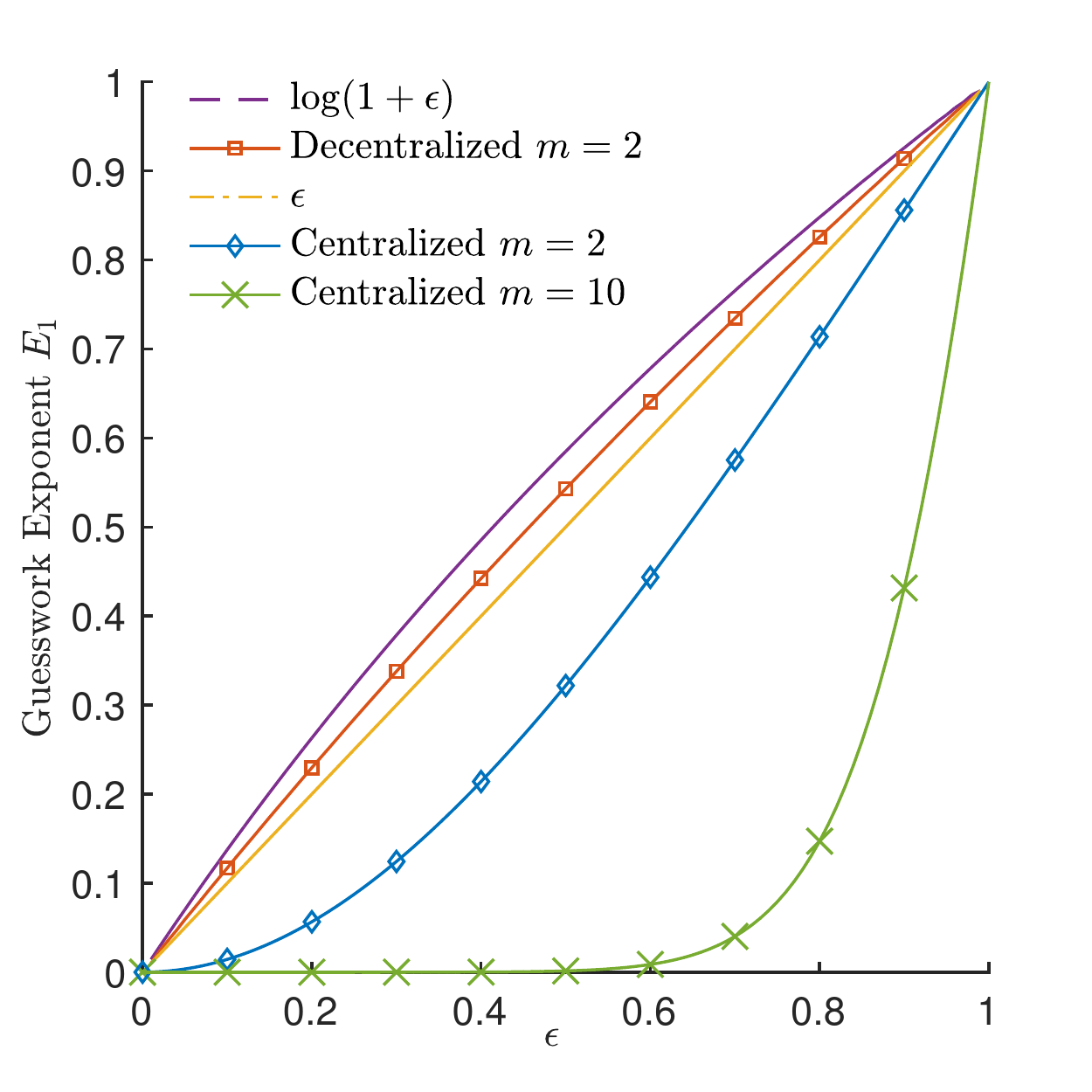}
	\caption{$\mathrm{BEC}(\epsilon)$: Exponents of the average guesswork (i.e. $\rho = 1$) for various $m$, and under centralized and decentralized strategies. Note that two cooperating agents have a convex exponent, which is better than any number of non-cooperating agents.}\label{fig:1}
	\end{minipage}
\quad
	\begin{minipage}{.45\textwidth}
	\centering
	\includegraphics[scale = .52]{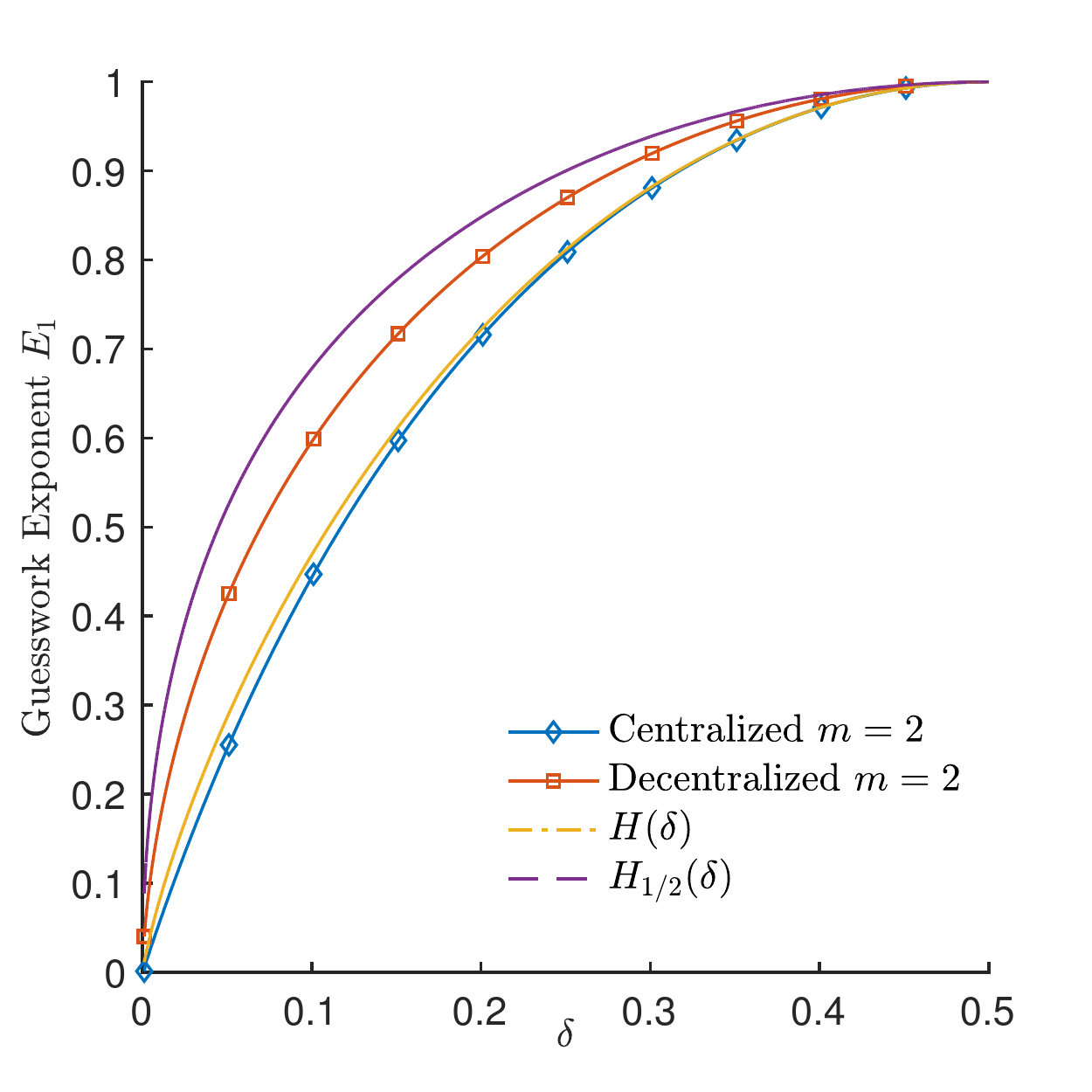}
	\caption{$\mathrm{BSC}(\delta)$: Exponents of the average guesswork (i.e. $\rho = 1$) for $m=2$ and as $m \to \infty$, and under centralized and decentralized strategies. Again, two cooperating agents have a better exponent than any number of non-cooperating agents.}\label{fig:2}
	\end{minipage}

\end{figure}

\appendices
\numberwithin{equation}{section}

\section{Additional Lemmas}
The following lemma characterizes the guesswork exponent of a sequence generated by the concatenation of a uniform binary sequence, and an arbitrary \emph{i.i.d.} sequence.
\begin{lemma}\label{lem:concatenation_uniform}
	Let $U \sim \mathrm{Bern}(1/2)$ and $V \sim \mathrm{Bern}(p)$, with $p \leq 1/2$, and denote by $U^{m_n}$ and $V^{n-m_n}$ their \emph{i.i.d.} sequences, for some sequence $m_n$ such that $\lim_{n \to \infty} \frac{m_n}{n} = \lambda$. Then, the guesswork exponent for sequence $X^n = (U^{m_n}, V^{n - m_n})$ is:
	\begin{align}
	\lim_{n \to \infty} \log \mathbb{E} \left[ G(\mathbf{X})^\rho \right] = \lambda \rho + (1 - \lambda) \rho H_{1/1+ \rho} (p).
	\end{align}
\end{lemma}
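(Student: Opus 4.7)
The plan is to exploit independence and uniformity of the $U$-block to decouple the guesswork of $\mathbf{X}$ into that of $V^{n-m_n}$ multiplied by a factor $2^{m_n}$, up to $n$-independent constants, and then apply Arikan's formula.

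First, since $U^{m_n}$ is uniform on $\{0,1\}^{m_n}$ and independent of $V^{n-m_n}$, the joint law satisfies $P_{\mathbf{X}}(u,v)=2^{-m_n}P_V(v)$, which depends only on $v$. Consequently an optimal guessing function for $\mathbf{X}$ can be realised by first ordering the $v$-blocks according to the optimal list for $V^{n-m_n}$ and then ordering the $u$'s arbitrarily inside each block. Letting $G_V^*(\cdot)$ denote the optimal guessing function for $V^{n-m_n}$ and $\sigma$ any bijection $\{0,1\}^{m_n}\to\{1,\ldots,2^{m_n}\}$, one may take
\begin{equation*}
G^*(u,v) \;=\; (G_V^*(v)-1)\, 2^{m_n} + \sigma(u).
\end{equation*}

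Next, I would condition on $V=v$ and use that $\sigma(U)$ is uniform on $\{1,\ldots,2^{m_n}\}$ to write
\begin{equation*}
\mathbb{E}\!\left[G^*(\mathbf{X})^\rho \mid V=v\right] \;=\; \frac{1}{2^{m_n}}\sum_{k=1}^{2^{m_n}}\bigl((G_V^*(v)-1)\,2^{m_n}+k\bigr)^\rho .
\end{equation*}
For every $\rho>0$ there exist constants $c_\rho,C_\rho>0$ such that $c_\rho(a^\rho+b^\rho)\le(a+b)^\rho\le C_\rho(a^\rho+b^\rho)$ for all $a,b\ge 0$; combined with the elementary estimate $\frac{1}{N}\sum_{k=1}^N k^\rho\asymp N^\rho$, this sandwiches the conditional moment between constant multiples of $2^{m_n\rho}\bigl[(G_V^*(v)-1)^\rho+1\bigr]$. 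Taking expectation over $V$ and observing that $\mathbb{E}[G_V^*(V^{n-m_n})^\rho]$ grows exponentially in $n-m_n$ for any $p\in(0,1/2]$, the additive ``$+1$'' is absorbed on the exponential scale and I obtain
\begin{equation*}
\mathbb{E}\!\left[G^*(\mathbf{X})^\rho\right] \;\asymp\; 2^{m_n\rho}\cdot\mathbb{E}\!\left[G_V^*(V^{n-m_n})^\rho\right],
\end{equation*}
with equivalence up to a multiplicative constant independent of $n$.

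Finally, taking logarithms, dividing by $n$, and letting $n\to\infty$ with $m_n/n\to\lambda$, the first factor contributes $\lambda\rho\,H_{1/(1+\rho)}(\mathrm{Bern}(1/2))$, while Arikan's formula \eqref{ArikanResult} applied to $V^{n-m_n}$ yields $(1-\lambda)\rho\,H_{1/(1+\rho)}(p)$ from the second factor; their sum is the claimed exponent. The only delicate point---essentially routine bookkeeping---is verifying that the constants in the sandwich bound depend on $\rho$ alone and thus wash out under the $\tfrac{1}{n}\log$ normalisation; the boundary case $G_V^*(v)=1$ (contributing only the ``$+1$'' term) is also harmless for the same reason, since the exponential growth of the $V$-guesswork dominates.
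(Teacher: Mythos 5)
Your proof is correct and rests on the same key observation as the paper's: since $P_{\mathbf{X}}(u,v)=2^{-m_n}P_V(v)$ depends only on $v$, the optimal list is the optimal $V$-list with each entry expanded into a block of $2^{m_n}$ arbitrarily ordered $u$-completions, so the guesswork of $\mathbf{X}$ is, up to constants, $2^{m_n}$ times that of $V^{n-m_n}$. The only difference is the finishing step: the paper re-derives the $V$-part exponent from scratch via the method of types (writing out only $\rho=1$ and asserting the general case), whereas you invoke Arikan's single-letter formula \eqref{ArikanResult} after a moment-sandwich argument that handles general $\rho>0$ explicitly --- both routes are valid and yield the stated exponent.
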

\begin{proof}
	We do the proof for $\rho = 1$, general case follows trivially. It is easy to verify that the optimal list is constructed by first ordering the subsequence $\mathbf{v}_{n-m_n}$ by most likely to least likely, and then concatenating to each such subsequence all the possible $\mathbf{u}_{m_n}$, in an arbitrary order. To reach a given $\mathbf{x}_n = (v^{n-m_n},u^{m_n})$, it is necessary to reach the subsequences $v^{n-m_n}$, and we have:
	\begin{align}
		\mathbb{E}\left[G(\mathbf{X}_n)\right] &= \mathbb{E}\left[\mathbb{E}\left[G(\mathbf{X}_n)| \mathbf{V}_{n-m_n} \right]\right] \nonumber\\*
		& \hspace{-1.5em} \de \sum_{\mathbf{v}_{n-m_n}} \hspace{-.5em}\exp\left\{ -(n-m_n) \left[D(\hat{P}_{\mathbf{v}} || P_V) + H(\hat{P}_{\mathbf{v}})\right] \right\} \nonumber \\* 
		& \hspace{1em} \times \exp\left\{(n-m_n) H(\hat{P}_{\mathbf{v}})\right\} \exp\{m_n\} \nonumber\\
		& \hspace{-1.5em}\de \sum_{\hat{P}_V} \exp\left\{ (n-m_n) \left[ H(\hat{P}_V) - D(\hat{P}_V || P_V) \right] + m_n \right\} \nonumber\\
		& \hspace{-1.5em}\de \exp\left\{ n \sup_{\hat{P}_V} (1 - \lambda)\left[ H(\hat{P}_V) - D(\hat{P}_V||P_V) \right] + \lambda \right\} . \nonumber
	\end{align}
	Solving the optimization yields the desired result.
\end{proof}
The next lemma compares the guesswork of a random variable which takes values in a discrete alphabet uniformly at random, with a random variables for which one of the symbol has been \emph{softly} removed. Precisely, we have
\begin{lemma}[Soft Elimination]\label{lem:soft_elim}
	Consider a random variable $U_N$ taking values uniformly in $[N]$, and $U$. For some $0 \leq s < 1$, we call a $K$ soft-elimination, a random variable $V_{(N,K)}$ such that:
	\begin{align}
	Pr(V_{(N,K)} = i) = \left\{ \begin{array}{ll}
	\frac{1}{N-1} & \text{if } 1 \leq i \leq N-K \\
	\frac{K-1}{K(N-1)} & \text{if } N-K \leq i \leq N 
	\end{array} \right.  .
	\end{align}
	Then, for any $\alpha > 0$, $\mathbb{E}[G(U_{N})^\alpha] > \mathbb{E}[G(V_{(N,K)})^\alpha] \geq \mathbb{E}[G(U_{N-1})^\alpha]$.
\end{lemma}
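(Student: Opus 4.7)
The strategy is a direct, elementary comparison of the three guesswork moments in closed form. Since $V_{(N,K)}$ places its larger mass $\tfrac{1}{N-1}$ on $\{1,\ldots,N-K\}$ and its strictly smaller mass $\tfrac{K-1}{K(N-1)}$ on $\{N-K+1,\ldots,N\}$, an optimal guessing function queries the first block before the second, so
\[
\mathbb{E}[G(V_{(N,K)})^\alpha] = \frac{1}{N-1}\sum_{i=1}^{N-K} i^\alpha + \frac{K-1}{K(N-1)}\sum_{i=N-K+1}^{N} i^\alpha,
\]
while $\mathbb{E}[G(U_N)^\alpha] = \frac{1}{N}\sum_{i=1}^N i^\alpha$ and $\mathbb{E}[G(U_{N-1})^\alpha] = \frac{1}{N-1}\sum_{i=1}^{N-1} i^\alpha$. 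With these three formulas in hand, the lemma reduces to two elementary inequalities between finite sums, which I would handle separately.

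For the right-hand inequality, I would subtract $\mathbb{E}[G(U_{N-1})^\alpha]$ from $\mathbb{E}[G(V_{(N,K)})^\alpha]$. The first $N-K$ terms cancel, and after combining the remainder over the common denominator $K(N-1)$, the difference collapses to $\tfrac{1}{K(N-1)}\bigl[(K-1)N^\alpha - \sum_{i=N-K+1}^{N-1} i^\alpha\bigr]$. Since this last sum has exactly $K-1$ terms, each strictly less than $N^\alpha$, the bracket is nonnegative, with equality precisely at $K=1$, where $V_{(N,K)}$ coincides with $U_{N-1}$.

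For the left-hand strict inequality, I would again subtract and collect coefficients using $\tfrac{1}{N}-\tfrac{1}{N-1}=-\tfrac{1}{N(N-1)}$ and $\tfrac{1}{N}-\tfrac{K-1}{K(N-1)}=\tfrac{N-K}{NK(N-1)}$. The difference factors as a positive constant $\tfrac{1}{N(N-1)}$ times
\[
\frac{N-K}{K}\sum_{i=N-K+1}^{N} i^\alpha \;-\; \sum_{i=1}^{N-K} i^\alpha,
\]
and the main obstacle is showing this bracket is strictly positive. The clean way is to lower-bound the first sum using its $K$ terms (each at least $(N-K+1)^\alpha$) by $(N-K)(N-K+1)^\alpha$, and to upper-bound the second using its $N-K$ terms (each at most $(N-K)^\alpha$) by $(N-K)^{\alpha+1}$; the common factor $N-K$ cancels and the strict monotonicity $(N-K+1)^\alpha > (N-K)^\alpha$ for $\alpha>0$ closes the argument whenever $1\leq K<N$. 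The degenerate case $K=N$ makes $V_{(N,K)}$ equal to $U_N$, so the strict inequality in the statement tacitly assumes $K<N$.
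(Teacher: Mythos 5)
Your proposal is correct and follows essentially the same route as the paper: write the three moments as explicit sums and compare term by term. In fact it is more complete than the paper's own argument, which only writes out the difference $\mathbb{E}[G(V_{(N,K)})^\alpha]-\mathbb{E}[G(U_{N-1})^\alpha]$ and asserts nonnegativity ``by evaluating the series,'' whereas you supply the explicit bound $(K-1)N^\alpha \geq \sum_{i=N-K+1}^{N-1} i^\alpha$ for that half, give a full argument for the left-hand strict inequality (which the paper omits entirely), and correctly flag the degenerate case $K=N$ where $V_{(N,K)}$ coincides with $U_N$.
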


\begin{proof}
	We have :
	\begin{align}
	& \mathbb{E}[G(V_{(N,K)})] - \mathbb{E}[G(U_{N-1})] =\\*
	 &\hspace{1em}\sum_{i = 1}^{N-K} i^\alpha \left( \frac{1}{N-1} - \frac{1}{N-1} \right) + \nonumber \\
	& \hspace{.35em} \sum_{i = N-K + 1}^{N-1} i^\alpha  \left( \frac{K-1}{K(N-1)}  - \frac{1}{N-1} \right) + N^\alpha \frac{K-1}{K(N-K)}. \nonumber 
	\end{align}
	By evaluating the series and combining terms it is easy to verify that the right hand side is non-negative.
\end{proof}

The following two lemmas relate the position of a sequence $\mathbf{x}$ in the optimal list, i.e. $G^*(\mathbf{x})$, with the type $\hat{P}_{\mathbf{x}}$ of that sequence, first without side-information, and then with side-information.
\begin{lemma}\label{lem:types}
	Let $\mathbf{x}$ be a i.i.d. generated sequence of length $n$,and consider the position of $\mathbf{x}$ in the optimal list according to $P_{X}$, \emph{i.e.} $G^*(\mathbf{x})$. 
	For a given $\alpha$, we have that $G^*(\mathbf{x}) < \lceil |\mathcal{X}|^\alpha \rceil $ if and only if the sequence $\mathbf{x}$ satisfy $\hat{P}_{\mathbf{x}} \in \mathcal{Q}(\alpha)$, where
	\begin{align}
	\mathcal{Q}(\alpha) &= \left\{Q_{X} : D(Q_{X}\| P_{X}) + H(Q_{X}) \right.\nonumber\\
	&\left.\quad\quad\quad\quad\quad< D(Q^*_{X}\| P_{X}) + H(Q^*_{X}) \right\},
	\end{align}
	with $Q^*_{X}$ being the solution of the optimization problem:
	\begin{equation}\label{eq:threshold_prob2}
	\begin{aligned}
	& \underset{Q_{X}}{\text{minimize}}
	& & D(Q_{X} \| P_{X} ) + H(Q_{X}) \\
	& \text{subject to}
	& & H(Q_{X}) \geq \alpha
	\end{aligned}
	\end{equation}
\end{lemma}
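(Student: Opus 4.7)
The plan is to apply the method of types to reduce the position of $\mathbf{x}$ in the optimal list to an exponentially-dominated sum over type classes, and then identify the geometric condition under which the exponent falls below the target threshold. Under the i.i.d.\ law $P_X$, every sequence $\mathbf{x}'$ with empirical distribution $\hat{P}_{\mathbf{x}'} = Q$ has probability $P_X^n(\mathbf{x}') = \exp\{-n[D(Q \| P_X) + H(Q)]\}$. Hence the optimal list produced by $G^*$ is equivalent to an ordering of type classes by \emph{increasing} value of $\Phi(Q) \triangleq D(Q\|P_X) + H(Q)$, with arbitrary tie-breaking within each class.

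From this observation I would write the position of $\mathbf{x}$ as the count of sequences at least as likely as $\mathbf{x}$:
\[
G^*(\mathbf{x}) \;=\; \sum_{Q' : \Phi(Q') \leq \Phi(\hat{P}_{\mathbf{x}})} |T(Q')|,
\]
up to a term of size at most $|T(\hat{P}_{\mathbf{x}})|$ coming from the indexing within the class of $\hat{P}_{\mathbf{x}}$ itself. Plugging in the standard method-of-types estimate $|T(Q')| \doteq \exp\{nH(Q')\}$ together with the polynomial bound on the number of types, the largest-term-dominates principle then yields
\[
G^*(\mathbf{x}) \;\doteq\; \exp\!\Big\{n \max_{Q':\,\Phi(Q')\leq \Phi(\hat{P}_{\mathbf{x}})} H(Q')\Big\}.
\]

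Next I would translate the threshold condition into a condition on this inner maximum: requiring $G^*(\mathbf{x})$ to lie below the stated exponential threshold is equivalent to requiring that no feasible $Q'$ achieves $H(Q')\geq \alpha$. Equivalently, the optimal value of the constrained minimization~\eqref{eq:threshold_prob2} — attained at $Q^*_X$ by definition — must strictly exceed $\Phi(\hat{P}_{\mathbf{x}})$. But this is precisely the definition of $\hat{P}_{\mathbf{x}}\in\mathcal{Q}(\alpha)$. The converse follows by reversing the chain: if $\hat{P}_{\mathbf{x}}\in\mathcal{Q}(\alpha)$, then any $Q'$ with $H(Q')\geq \alpha$ is infeasible for the inner maximization, so the maximum is strictly capped below $\alpha$, and $G^*(\mathbf{x})$ falls below the threshold.

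The main delicate point I expect to wrestle with is the strict-versus-nonstrict inequality in the definition of $\mathcal{Q}(\alpha)$, since the equivalence ``$\doteq$'' is insensitive to sub-exponential corrections and the statement is sharp only at the exponential scale. I would handle this by working with the explicit polynomial sandwich $(n+1)^{-|\mathcal{X}|}\exp\{nH(Q')\} \leq |T(Q')| \leq \exp\{nH(Q')\}$ and by noting that the optimizer $Q^*_X$ varies continuously in $\alpha$ on the simplex (the objective $\Phi$ is continuous and the feasible set $\{Q : H(Q)\geq\alpha\}$ is compact), so that $o(n)$ fluctuations of the threshold do not perturb the characterization of the set $\mathcal{Q}(\alpha)$.
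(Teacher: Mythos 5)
Your proposal is correct and follows essentially the same route as the paper: order the type classes by increasing $D(Q\|P_X)+H(Q)$, use $|T(Q)|\doteq \exp\{nH(Q)\}$ together with the largest-exponent-dominates principle, and identify the threshold type via the constrained minimization \eqref{eq:threshold_prob2}. The only cosmetic difference is that the paper proves the conditional (side-information) version, Lemma~\ref{lem:types_si}, and obtains Lemma~\ref{lem:types} as a special case, whereas you argue the unconditional case directly; your explicit handling of the within-class tie-breaking term and the strict-inequality subtlety is, if anything, slightly more careful than the paper's.
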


\begin{lemma}\label{lem:types_si}
	Let $(\mathbf{x}_n,\mathbf{y}_n)$ be a pair of binary sequences ,and consider the position of $\mathbf{x}$ in the optimal list according to $P_{X|Y}$, \emph{i.e.} $G^*(\mathbf{x}|\mathbf{y})$. 
	For a given $\alpha$, we have that $G^*(\mathbf{x}|\mathbf{y}) < \lceil |\mathcal{X}|^\alpha \rceil $ if and only if the sequence $(\mathbf{x},\mathbf{y})$ satisfy $\hat{P}_{\mathbf{x}|\mathbf{y}} \in \mathcal{Q}(\alpha,\hat{P}_\mathbf{y})$, where
	\begin{align}
	\mathcal{Q}(\alpha,\hat{P}_{\mathbf{y}}) &= \left\{ Q_{X|Y} : D(Q_{X|Y}\| P_{X|Y}|\hat{P}_{\mathbf{y}}) + H(Q_{X|Y}|\hat{P}_{\mathbf{y}})\right.\nonumber\\
	&\left.< D(Q^*_{X|Y}\| P_{X|Y}|\hat{P}_{\mathbf{y}}) + H(Q^*_{X|Y}|\hat{P}_{\mathbf{y}}) \right\}.
	\end{align}
	with $Q^*_{X|Y}$ being the solution of the optimization problem:
	\begin{equation}\label{eq:threshold_prob3}
		\begin{aligned}
		& \underset{Q_{X|Y}}{\text{minimize}}
		& & D(Q_{X|Y} \| P_{X|Y} | \hat{P}_{\mathbf{y}}) + H(Q_{X|Y}|\hat{P}_{\mathbf{y}}) \\
		& \text{subject to}
		& & H(Q_{X|Y}|\hat{P}_{\mathbf{y}}) \geq \alpha .
		\end{aligned}
	\end{equation}
\end{lemma}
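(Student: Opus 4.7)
\textbf{Proof plan for Lemma~\ref{lem:types_si}.} The approach is a direct method-of-types calculation. I read the statement asymptotically (i.e.\ up to polynomial factors in $n$, matching the exponent-of-interest used elsewhere in the paper, in particular in \eqref{eq:types_2}; I suspect $|\mathcal{X}|^{\alpha}$ should be $|\mathcal{X}|^{n\alpha}$, as it appears in the proof of Theorem~\ref{thm:decentralized}). First I would express the conditional likelihood of any sequence $\mathbf{x}'$ given $\mathbf{y}$ in terms of its conditional type relative to $\mathbf{y}$: if $\mathbf{y}$ has type $\hat{P}_{\mathbf{y}}$ and the conditional type of $\mathbf{x}'$ given $\mathbf{y}$ is $Q_{X|Y}$, then by the standard identity,
\begin{equation}
P^n_{X|Y}(\mathbf{x}'|\mathbf{y}) = \exp\!\Big\{-n\bigl[H(Q_{X|Y}|\hat{P}_{\mathbf{y}}) + D(Q_{X|Y}\|P_{X|Y}|\hat{P}_{\mathbf{y}})\bigr]\Big\}. \nonumber
\end{equation}
Hence ``$\mathbf{x}'$ is at least as likely as $\mathbf{x}$ given $\mathbf{y}$'' becomes the single-letter condition $H(Q_{X|Y}|\hat{P}_{\mathbf{y}}) + D(Q_{X|Y}\|P_{X|Y}|\hat{P}_{\mathbf{y}}) \le H(\hat{P}_{X|Y}|\hat{P}_{\mathbf{y}}) + D(\hat{P}_{X|Y}\|P_{X|Y}|\hat{P}_{\mathbf{y}})$, where $\hat{P}_{X|Y}$ denotes the conditional type of $\mathbf{x}$ given $\mathbf{y}$.

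Second, I would count sequences in the prefix of the optimal list up to $\mathbf{x}$. Partitioning the ambient $\mathcal{X}^n$ by conditional type given $\mathbf{y}$, each conditional-type class has cardinality $\doteq \exp\{n H(Q_{X|Y}|\hat{P}_{\mathbf{y}})\}$, and the number of conditional types is polynomial in $n$. Therefore
\begin{equation}
G^*(\mathbf{x}|\mathbf{y}) \;\doteq\; \max_{Q_{X|Y} \in \mathcal{S}(\hat{P}_{X|Y},\hat{P}_{\mathbf{y}})} \exp\{n H(Q_{X|Y}|\hat{P}_{\mathbf{y}})\}, \nonumber
\end{equation}
where $\mathcal{S}(\hat{P}_{X|Y},\hat{P}_{\mathbf{y}})$ denotes those $Q_{X|Y}$ with $H(Q|\hat{P}_{\mathbf{y}})+D(Q\|P_{X|Y}|\hat{P}_{\mathbf{y}}) \le H(\hat{P}_{X|Y}|\hat{P}_{\mathbf{y}})+D(\hat{P}_{X|Y}\|P_{X|Y}|\hat{P}_{\mathbf{y}})$. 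So the threshold $G^*(\mathbf{x}|\mathbf{y}) < \exp\{n\alpha\}$ reduces (to exponential order) to requiring that every $Q_{X|Y} \in \mathcal{S}(\hat{P}_{X|Y},\hat{P}_{\mathbf{y}})$ has conditional entropy strictly less than $\alpha$.

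Third, I would match this condition to membership in $\mathcal{Q}(\alpha,\hat{P}_{\mathbf{y}})$. Recall that $Q^*_{X|Y}$ is the minimizer of $D+H$ subject to the entropy constraint $H(Q|\hat{P}_{\mathbf{y}}) \ge \alpha$; since the feasible set is compact and the objective continuous, existence of $Q^*$ is immediate. The forward direction: if $\hat{P}_{X|Y}\in\mathcal{Q}(\alpha,\hat{P}_{\mathbf{y}})$, then for every $Q$ with $H(Q|\hat{P}_{\mathbf{y}})\ge\alpha$, by the defining optimality of $Q^*$ and the strict inequality in $\mathcal{Q}$, $H(Q|\hat{P}_{\mathbf{y}})+D(Q\|P_{X|Y}|\hat{P}_{\mathbf{y}}) \ge H(Q^*|\hat{P}_{\mathbf{y}})+D(Q^*\|P_{X|Y}|\hat{P}_{\mathbf{y}}) > H(\hat{P}_{X|Y}|\hat{P}_{\mathbf{y}})+D(\hat{P}_{X|Y}\|P_{X|Y}|\hat{P}_{\mathbf{y}})$, so $Q \notin \mathcal{S}$, hence only types with $H<\alpha$ contribute and $G^*(\mathbf{x}|\mathbf{y}) \doteq \exp\{n\alpha'\}$ for some $\alpha' < \alpha$. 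Conversely, if $\hat{P}_{X|Y}\notin\mathcal{Q}(\alpha,\hat{P}_{\mathbf{y}})$, then $Q^*$ itself lies in $\mathcal{S}(\hat{P}_{X|Y},\hat{P}_{\mathbf{y}})$ and has $H(Q^*|\hat{P}_{\mathbf{y}})\ge\alpha$, so the max above is at least $\exp\{n\alpha\}$.

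The main obstacle, and the reason I would present this as an asymptotic equivalence rather than a pointwise statement, is reconciling the strict integer inequality $G^*(\mathbf{x}|\mathbf{y}) < \lceil|\mathcal{X}|^{n\alpha}\rceil$ with the strict inequality used in defining $\mathcal{Q}(\alpha,\hat{P}_{\mathbf{y}})$ when $\hat{P}_{X|Y}$ lies on the boundary of $\mathcal{Q}$. At the boundary, polynomial prefactors in the type-class size and the (polynomial) number of types can push the count above or below the integer threshold in ways not controlled by the leading exponent. For the downstream use in the proof of Theorem~\ref{thm:decentralized}, however, only the exponential order matters (the summation over $i = 1,\ldots,|\mathcal{X}|^n$ is itself quantized into bins of exponential order $e^{n\alpha}$), so this boundary issue is harmless, and the clean single-letter characterization above suffices.
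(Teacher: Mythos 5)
Your proposal is correct and follows essentially the same route as the paper's proof: both rest on the identity $P^n_{X|Y}(\mathbf{x}|\mathbf{y})=\exp\{-n[D(\hat{P}_{\mathbf{x}|\mathbf{y}}\|P_{X|Y}|\hat{P}_{\mathbf{y}})+H(\hat{P}_{\mathbf{x}|\mathbf{y}}|\hat{P}_{\mathbf{y}})]\}$, the type-class count $\doteq\exp\{nH(Q_{X|Y}|\hat{P}_{\mathbf{y}})\}$, and the polynomial number of conditional types, with the only cosmetic difference being that the paper identifies the threshold type filling the first $e^{n\alpha}$ list positions while you compute $G^*(\mathbf{x}|\mathbf{y})$ directly as the exponential count of more-likely sequences. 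Your explicit treatment of both directions of the equivalence and your caveat about boundary types and polynomial prefactors (reading the lemma to exponential order, which is all that Theorem~\ref{thm:decentralized} requires) is, if anything, slightly more careful than the paper's own argument.
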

Since Lemma~\ref{lem:types} is a direct consequence of Lemma~\ref{lem:types_si}, we only include the proof of the latter.
\begin{proof}
	Recall that $P_{Y|X}(\mathbf{x}|\mathbf{y}) = \exp \{-n \left(D(\hat{P}_{\mathbf{x}|\mathbf{y}}\| P_{X|Y} | \hat{P}_\mathbf{y}) + H(\hat{P}_{\mathbf{x}|\mathbf{y}}|\hat{P}_\mathbf{y}) \right)\}$. 
	Furthermore, note that for a given $\mathbf{y}$ the number of sequences $\mathbf{x}$ which have conditional type $Q_{\mathbf{x}|\mathbf{y}}$ is given by $|T(Q_{\mathbf{x}|\mathbf{y}}) (\mathbf{y})| \de \exp \{ n H(Q_{\mathbf{x}|\mathbf{y}} | \hat{P}_{\mathbf{y}} )  \}$ (see, e.g., \cite[Lemma~2.5]{CsisKro}). 
	Let $\hat{\mathcal{Q}}(\alpha,\hat{P}_{\mathbf{y}})$ be the set of types of the sequences that are in the first $\mathcal{X}^{n\alpha}$ position in the list, that is $\hat{\mathcal{Q}}(\alpha,\hat{P}_{\mathbf{y}})$ is such that:
	\begin{align}
		\sum_{Q_{X|Y} \in \mathcal{Q}(\alpha, \hat{P}_{\mathbf{y}})} 2^{nH(Q_{X|Y}|\hat{P}_{\mathbf{y}})} = 2^{n\alpha} .
	\end{align}
	An application of the method of types gives that the left-hand side evaluates (exponentially) to $2^{n \sup_{Q_{X|Y} \in \hat{\mathcal{Q}}(\alpha,\hat{P}_{\mathbf{y}})} H(Q_{X|Y}|\hat{P}_{\mathbf{y}})}$, meaning that $\sup_{Q_{X|Y} \in \mathcal{Q}}H(Q_{X|Y}|\hat{P}_{\mathbf{y}}) = \alpha$. Thus, the threshold probability is given by \eqref{eq:threshold_prob3}, and any type that has lower probability appears before in the list.
\end{proof}

The list $\mathcal{Q}(\alpha,\hat{P}_{\mathbf{y}})$ is specified implicitely for any $P_{Y|X}$, but can also be made explicit for some specific channels. In particular, binary erasures channels yield to an easy characterization of $\mathcal{Q}(\alpha,\hat{P}_{\mathbf{y}})$. Indeed, in this case the only reverse channel types which need to be considered are those that are valid outputs of an erasure channel. Thus, the order of $\mathbf{x}$ in the ordered list after observation $\mathbf{y}$ solely depends on the type of $\mathbf{x}$ over the position which are erased in $\mathbf{y}$, which we shall denote by $\hat{Q}^{(\epsilon)}_{X|Y}$. Letting $\epsilon$ be the erasure symbol, $\hat{P}_{\mathbf{y}}(\epsilon)$ is thus the fraction of erasures in the received output $\mathbf{y}$, and assuming $P_X(0) > P_X(1)$, we have $\hat{P}_{X|Y} \in \mathcal{Q}(\alpha,\hat{P}_\mathbf{y})$ iff $\hat{Q}^{(\epsilon)}_{X|Y} < \frac{\alpha}{\hat{P}_\mathbf{y}(\epsilon)}$. 

\bibliographystyle{IEEEtran}
\bibliography{strings}

\end{document}